\newcommand*\patchAmsMathEnvironmentForLineno[1]{%
  \expandafter\let\csname old#1\expandafter\endcsname\csname #1\endcsname
  \expandafter\let\csname oldend#1\expandafter\endcsname\csname end#1\endcsname
  \renewenvironment{#1}%
     {\linenomath\csname old#1\endcsname}%
     {\csname oldend#1\endcsname\endlinenomath}}%
\newcommand*\patchBothAmsMathEnvironmentsForLineno[1]{%
  \patchAmsMathEnvironmentForLineno{#1}%
  \patchAmsMathEnvironmentForLineno{#1*}}%
\definecolor{blu3}{rgb}{.1,.0,.4}
\newtheorem{theorem}{Theorem}
\newtheorem{corollary}[theorem]{Corollary}
\newtheorem{lemma}[theorem]{Lemma}
\newtheorem{graphencoding}{Graph Encoding}
\newcommand{\NN}{\ensuremath{\mathbb N}}    
\newcommand{\UU}{\ensuremath{\mathcal{U}}}  
\newcommand{\DD}{\ensuremath{\mathcal{D}}}  
\DeclareMathOperator{\polylog}{polylog}
\DeclareMathOperator{\next}{next}
\DeclareMathOperator{\prev}{prev}
\DeclareMathOperator{\DS}{DS}
\def\dart#1#2{#1\mathord\shortrightarrow#2}
\newcommand\eps{\varepsilon}
\def\DEF#1{\textbf{\emph{#1}}}
\begin{document}
\setcounter{page}{0} 

\title{Minimum Cuts in Geometric Intersection Graphs}

\author{Sergio Cabello\thanks{Faculty of Mathematics and Physics, 
		University of Ljubljana, Slovenia, and 
		Institute of Mathematics, Physics and Mechanics, Slovenia.
		Supported by the Slovenian Research Agency (P1-0297, 
		J1-9109, J1-8130, J1-8155, J1-1693, J1-2452.). 
		Email address: sergio.cabello@fmf.uni-lj.si}
\and
	Wolfgang Mulzer\thanks{Institut f\"ur Informatik,
		Freie Universit\"at Berlin, Germany.
		Supported in part by ERC StG 757609.
                Email address: mulzer@inf.fu-berlin.de}}

\maketitle

\thispagestyle{empty}
\begin{abstract}
    Let $\mathcal{D}$ be a set of $n$ disks
	in the plane. The \emph{disk graph} $G_\mathcal{D}$ 
	for $\mathcal{D}$ is the undirected graph with vertex set
	$\mathcal{D}$ in which two disks are joined by an edge
	if and only if they intersect. The \emph{directed transmission 
	graph} $G^{\rightarrow}_\mathcal{D}$ for $\mathcal{D}$ is the
	directed graph with vertex set $\mathcal{D}$ in which there
	is an edge from a disk $D_1 \in \mathcal{D}$ to a disk 
	$D_2 \in \mathcal{D}$ if and 
	only if $D_1$ contains the center of $D_2$.

	Given $\mathcal{D}$ and two non-intersecting 
	disks $s, t \in \mathcal{D}$, we 
	show that a minimum $s$-$t$ vertex 
	cut in $G_\mathcal{D}$ or in $G^{\rightarrow}_\mathcal{D}$ 
	can be found in 
	$O(n^{3/2}\polylog n)$ expected time.
	To obtain our result, we combine an algorithm for the 
	maximum flow problem in general graphs with dynamic geometric data 
	structures to manipulate the disks.

    As an application, we consider the 
	\emph{barrier resilience problem}
	in a rectangular domain. 	In this problem, we have
    a vertical strip $S$ bounded by two 
    vertical lines, $L_\ell$ and $L_r$, and a collection 
    $\mathcal{D}$ of disks. Let $a$ be a point in $S$ above 
    all disks of $\mathcal{D}$, and
    let $b$ a point in $S$ below all disks of $\mathcal{D}$. 
    The task is to find a curve from $a$ to $b$ that lies in
    $S$ and that intersects as few disks of 
    $\mathcal{D}$ as possible. 
    Using our improved algorithm for minimum cuts
	in disk graphs, we can solve the barrier resilience
	problem in 
	$O(n^{3/2}\polylog n)$ expected time.

    \medskip
    \textbf{Keywords:} computational geometry, geometric intersection graph,
		disk graph, unit-disk graph, vertex-disjoint paths, 
		barrier resilience.
\end{abstract}

\paragraph{Acknowledgments.}
Parts of this work were initiated at the Fifth Annual Workshop 
on Geometry and Graphs that took place March 5--10, 2017, at 
the Bellairs Research Institute.
We thank the organizers and all participants for the 
productive and positive atmosphere.

\newpage
\section{Introduction}

Let $\DD$ be a family of $n$
(closed) disks in the plane. The \DEF{disk graph} 
$G_\DD$ for $\DD$ is the undirected graph with 
vertex set $\DD$ and edge set 
\[
	E(G_\DD) ~=~ \{ D_1D_2 \mid D_1, D_2 \in \DD, \, D_1 \cap D_2 
	\neq \emptyset \}.
\]
If the disks in $\DD$ are partitioned into 
two sets $\DD_A$ and $\DD_B$, one can also define a 
\emph{bipartite} intersection graph by considering
only the edges that come from an intersection between
a disk in $\DD_A$ and a disk in $\DD_B$.
If all disks in 
$\DD$ have the same radius, we call $G_\DD$ a 
\DEF{unit-disk graph}.
A directed version of disk graphs can be 
defined as follows: for $D \in \DD$, let 
$c_D \in D$ denote the center of $D$.
The \DEF{directed transmission graph} 
$G^\rightarrow_\DD$ is the directed graph with vertex 
set $\DD$ and edge set 
\[
	E\left( {G^\rightarrow_\DD}\right) ~=~ 
	\{ D_1 \rightarrow D_2 \mid D_1,D_2 \in \DD, \, c_{D_2} \in D_1  \}.
\] 
If we ignore the direction of the edges in $G^\rightarrow_\DD$,
we obtain a subgraph of $G_\DD$.

Unit disk graphs are often used to model ad-hoc 
wireless communication networks and sensor 
networks~\cite{GG11,zg-wsn-04,HS95}. Disks of 
varying sizes become relevant 
when different sensors cover different areas.
Moreover, general disk graphs may
serve as a tool to approach other problems; 
for example, an application to the barrier
resilience problem~\cite{kumar2007barrier} 
is discussed below. Directed transmission 
graphs model ad-hoc networks where different
entities have different power ranges~\cite{PelegR2010}.

\paragraph{Minimum \texorpdfstring{$s$-$t$}{s-t} cut in disk graphs.}
Consider a graph $G = (V, E)$ with $n$ vertices and
$m$ edges, and two non-adjacent 
vertices $s, t \in V$.
A set $X \subseteq V \setminus \{ s,t \}$
of vertices is called an \DEF{$s$-$t$ (vertex) cut} 
if $G-X$ contains no path from $s$ to $t$.
Two paths from $s$ to $t$ are \DEF{(interior-)vertex-disjoint}
if their only common vertices are $s$ and $t$.
By Menger's theorem (see, for 
example,~\cite[Section 8.2]{KorteV10}), 
the minimum size of an $s$-$t$ cut equals the 
maximum number of vertex-disjoint $s$-$t$ paths,
both in directed and in undirected graphs.
Using blocking flows, Even and Tarjan, as well as 
Karzanov~\cite{EvenT75,Karzanov73}
showed that an $s$-$t$ minimum-cut can be computed
in time $O(\sqrt{n} m)$. In the worst case, if $m = \Theta(n^2)$, 
this is $O(n^{5/2})$.
This was an improvement over the previous algorithm by Dinitz~\cite{Dinic70};
see~\cite{Dinitz06} for a great historical account
of the algorithms. In particular, the use of DFS did not appear in
his original description~\cite{Dinic70}, but it was developed by 
Shimon Even and Alon Itai and included in Even's textbook~\cite{Even79}.
The more recent $O(m^{10/7})$-time algorithm of M\k{a}dry~\cite{Madry13} gives
a better running time for sparse graphs,
i.e., for $m = o(n^{7/4})$.

The size of a minimum $s$-$t$ vertex cut in a network $G$ is a 
key estimator for its vulnerability.
Since such networks
often arise from geometric settings, it is natural to consider the
case where $G$ is a disk graph.
A particularly interesting scenario of this kind
is the \DEF{barrier resilience problem}, an 
optimization problem introduced by Kumar, Lai, 
and Arora~\cite{kumar2007barrier}. In one variant of
the problem, we are given
a vertical strip $S$ bounded by two 
vertical lines, $L_\ell$ and $L_r$, and a collection 
$\DD$ of disks. Each disk $D \in \DD$ represents a region
monitored by a sensor. Let $a$ be a point in $S$ above 
all disks of $\DD$, and
let $b$ a point in $S$ below all disks of $\DD$. 
The task is to find a curve from $a$ to $b$ that lies in
the strip $S$ and that intersects as few disks of 
$\DD$ as possible (the disks do not need to lie
inside $S$). This models the 
resilience of monitoring a boundary region with 
respect to (total) failures of the sensors.
Kumar, Lai, and Arora show that the problem 
reduces to an $L_\ell$-$L_r$ minimum-cut 
problem in the intersection graph of 
$\DD \cup\{ L_\ell,L_r\}$. We mention that
for another variant of the problem, where
the endpoints $a$ and $b$ can
lie in arbitrary locations, the complexity status
is still unknown, despite many efforts by several 
researchers~\cite{AltCaGiKn17,BeregKi09,ChanKi14,EibenL20,KormanLoSiSt18,TsengKi11}.

A variant of the problem, called \DEF{minimum shrinkage},
was recently introduced by Cabello et al.~\cite{CabelloJLM20}. 
Here, the task is to shrink some of the disks, 
potentially by different amounts, such 
that there is an $a$-$b$ curve that is disjoint 
from the interiors of all disks. The objective 
is to minimize the total amount of shrinkage.
Cabello et al.~provide an FPTAS
for the version where the path is restricted to lie inside
a vertical strip and the endpoints $a$ and $b$
are above and below all the disks. This result
is achieved by reducing the problem to a barrier 
resilience instance with $O(n^2/\eps)$ disks of 
different radii.
In contrast, when the endpoints $a$ and $b$ can
lie in arbitrary locations, 
the problem is weakly NP-hard~\cite{CabelloV20}.

\paragraph{Our Results.}
We exploit the geometric structure
to provide 
a new algorithm to find the minimum $s$-$t$ cut in disk 
graphs and directed transmission graphs
in $O(n^{3/2}\polylog n)$ expected time.
For this, we adapt the approach
of Even and Tarjan~\cite{EvenT75}, extending it with
suitable geometric data structures.
Our method is similar in spirit to the algorithm
by Efrat, Itai, and Katz~\cite{EfratIK01} for 
maximum \emph{bipartite} matching
in (unit) disk graphs. However, since our graph is not
bipartite, the structure of the graph is more complex and additional care is needed.

\section{Minimum \texorpdfstring{$s$-$t$}{s-t} Cut in Disk Graphs}
\label{sec:cut}

Let $\DD$ be a set of $n$ disks in the plane,
and let $s, t \in \DD$ be two non-intersecting disks.
We show how to compute the 
maximum number of vertex disjoint paths between
$s$ and $t$ in $G_\DD$ and in $G^\rightarrow_\DD$.
This also provides a way to find a minimum $s$-$t$ (vertex) cut.
For this, we adapt the algorithm of 
Even and Tarjan~\cite{EvenT75} to 
our geometric setting. First, we suppose 
that certain geometric primitives are available
as a black box, and we analyze the running time
under this assumption. Then,
we instantiate these primitives with 
appropriate data structures to obtain the desired result.

\subsection{Generic algorithm}
\label{sec:generic}

Let $G$ be a graph with $n$ vertices and $m$ edges, 
and let $s$ and $t$ be two non-adjacent vertices of $G$.
We want to find the maximum number of paths from $s$
to $t$ in $G$ that are pairwise vertex disjoint. 
The graph $G$ is assumed to be directed.\footnote{Otherwise,
we replace each undirected edge $uv$ by 
two directed edges $\dart uv$ and $\dart vu$.
An optimal solution to the directed instance 
directly gives an optimal solution to the
undirected case.}

\begin{figure}
	\centering
	\includegraphics[]{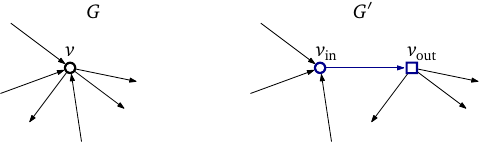}
	\caption{Transforming a vertex 
	$v\in V(G)\setminus \{s,t\}$ to get from $G$ to $G'$. }
	\label{fig:flow1}
\end{figure}

First, we transform the graph $G$ into another graph $G'$
in which every vertex other than $s$ and $t$ has 
in-degree or out-degree $1$. 
More precisely, for each vertex $v \in V(G) \setminus \{ s, t\}$, 
we perform the following operation:
we replace $v$ with two new vertices $v_{\rm in}$ and $v_{\rm out}$,
add the directed edge $\dart{v_{\rm in}}{v_{\rm out}}$, 
replace every directed edge $\dart uv$ with $\dart{u}{v_{\rm in}}$,
and replace every directed edge $\dart vw$ with $\dart{v_{\rm out}}{w}$;
see Figure~\ref{fig:flow1}.
The vertices $s$ and $t$ remain untouched.
The transformed graph $G'$ has $2n-2$ vertices and $m+n-2$ edges.
It is bipartite, as can be seen by partitioning the vertices
into  the sets
$\{ s \} \cup \{ v_{\rm out}\mid v\in V(G)\setminus\{s,t\}\}$
and $\{ t \} \cup \{ v_{\rm in}\mid v\in V(G)\setminus\{s,t\}\}$.
Vertex-disjoint $s$-$t$ paths in $G$ directly correspond to
vertex disjoint $s$-$t$ paths in $G'$. Furthermore, 
in $G'$ we have that
edge-disjoint and vertex-disjoint
$s$-$t$ paths are equivalent, because 
every vertex (other than $s$ and $t$) has in-degree or out-degree $1$.
Thus, it suffices to find the maximum number of 
edge-disjoint $s$-$t$ paths in $G'$. 

\begin{figure}
	\centering
	\includegraphics[width=\textwidth,page=2]{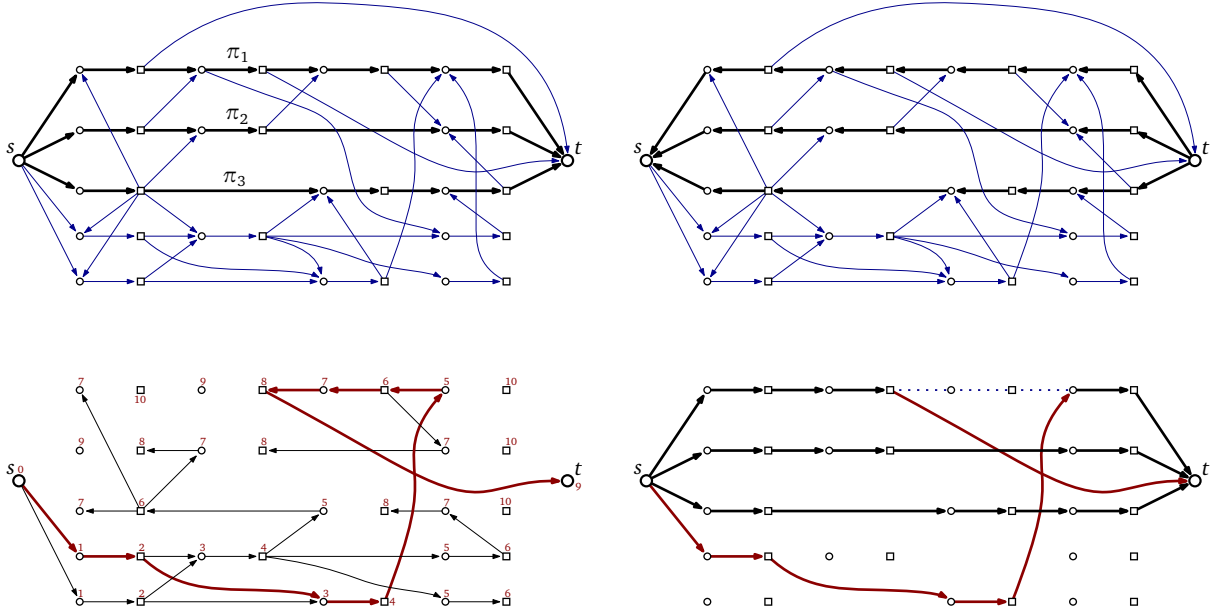}
	\caption{Top left: a graph $G'$ with $3$ vertex-disjoint $s$-$t$ 
	paths $\Pi=\{\pi_1,\pi_2,\pi_3\}$
			in bold.
			Top right: the residual graph $R(G', \Pi)$.
			Bottom left: the layered residual graph $L(G',\Pi)$.
			We keep all vertices and each vertex
			has its distance 
			from the vertex $s$ annotated. An $s$-$t$ path 
			$\gamma$ in $L$ is marked in thick red.
			Bottom right: the paths obtained 
			from $E(\Pi)\oplus E(\gamma)$.	}
	\label{fig:flow2}
\end{figure}

Assume we have a family $\Pi=\{ \pi_1,\dots,\pi_k\}$ 
of $k$ edge-disjoint $s$-$t$ paths in $G'$. 
Let $E(\Pi)=\bigcup_{\pi\in \Pi} E(\pi)$ 
denote the set of all the  
directed edges on the paths of $\Pi$.
See Figure~\ref{fig:flow2} for an illustration of the 
following concepts and discussion.
The \DEF{residual graph} $R=R(G',\Pi)$ is the directed graph
with vertex set $V(G')$ and edge set
\[
  E(R) = \{  \dart uv \mid \dart uv \in E(G')\setminus E(\Pi) \text{ or } \dart vu \in E(\Pi) \}.
\]
The residual graph $R$ is bipartite with the same bipartition as $G'$.
As in $G'$, every vertex in $V(R) \setminus \{s, t\}$ 
has in-degree or out-degree at most $1$.

For a vertex $v$ of $G'$, the \DEF{level} $\lambda(v)$ 
(with respect to $R$) of $v$ 
is the BFS-distance from $s$ to $v$ in $R$, i.e., 
the minimum number of edges on a path from $s$
to $v$ in $R$.\footnote{Recall that $R$ depends on
both $G'$ and $\Pi$.} 
If $v$ is not reachable from $s$ in $R$, 
we set $\lambda(v)=+\infty$.
For every integer $i \geq 0$,
the \DEF{layer} $L[i]$ is the set of vertices at level $i$, i.e.,
$L[i]=\{ v\in V(G')\mid \lambda(v)=i\}$.
The \DEF{layered residual graph} $L(G',\Pi)$ for $G'$ and $\Pi$
is the subgraph of the residual graph $R(G',\Pi)$
where only the directed edges from $L[i-1]$ to $L[i]$,
for $i = 1, \dots, \lambda(t) - 1$, and the 
directed edges from $L[\lambda(t)-1]$ to $t$ are kept.
More precisely, this means that
$L=L(G',\Pi)$ has vertex set $V(G')$ and directed edge set
\[
	E_t\cup \{ \dart uv\in E(R) \mid \lambda(u)+1=\lambda(v)<\lambda(t) \},
\]
where
\[
	E_t ~~=~~ \{ \dart ut\in E(R) \mid \lambda(u)+1=\lambda(t) \}.
\]

Let $\Gamma = \{\gamma_1,\dots, \gamma_\ell \}$ be
a family of edge-disjoint $s$-$t$ paths in the layered 
residual graph $L=L(G',\Pi)$. By construction,
all paths of $\Gamma$ have exactly
$\lambda(t)$ edges.
Using the $k$ paths of $\Pi$ in $G'$ and the $\ell$ paths 
of $\Gamma$ in $L$, we can obtain
$k+\ell$ edge-disjoint $s$-$t$ paths in $G'$.
For this, consider the edges
\begin{align*}
	E(\Pi)\oplus E(\Gamma) ~~&=~~
	\{ \dart uv \mid \dart uv\in E(\Pi) \text{ and } 
	\dart vu\notin E(\Gamma)\} \cup
	\{ \dart uv \mid \dart uv\in E(\Gamma) \text{ and } 
	\dart vu\notin E(\Pi)\}
\end{align*}
that are obtained from $E(\Pi)\cup E(\Gamma)$ 
by canceling out directed edges that appear in both directions.
The following observation is simple:

\begin{lemma}
\label{lem:combination}
The set $E(\Pi)\oplus E(\Gamma)$ consists of $k+\ell$ edge-disjoint 
	$s$-$t$ paths in $G'$.
	Given $\Pi$ and $\Gamma$, we can construct $E(\Pi)\oplus E(\Gamma)$ and 
	the corresponding $k+\ell$ edge-disjoint $s$-$t$ paths in $G'$
	in $O(|E(\Pi)|+|E(\Gamma)|)$ total time.
\end{lemma}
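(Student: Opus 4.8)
The plan is to verify the two standard properties of the symmetric difference of two edge sets that each decompose into $s$-$t$ paths, and to observe that the construction is algorithmically straightforward. First I would recall the key structural fact: in $G'$ every vertex other than $s$ and $t$ has in-degree or out-degree exactly $1$, hence so does every vertex in $R$ and in $L$ (this was already noted in the text). Since $\Pi$ is a family of $k$ edge-disjoint $s$-$t$ paths in $G'$, at every vertex $v \notin \{s,t\}$ the edges of $E(\Pi)$ entering $v$ equal the edges leaving $v$ in number; because $v$ has total in-degree or total out-degree $1$ in $G'$, this common number is $0$ or $1$. The same holds for $E(\Gamma)$ inside $L$, and since $L$ is a subgraph of $R$ whose edges are either edges of $G' \setminus E(\Pi)$ or reversals of edges of $E(\Pi)$, an edge $\dart vu \in E(\Gamma)$ with $\dart uv \in E(\Pi)$ is exactly a ``cancellation''. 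I would make precise that $E(\Pi)$, viewed in $G'$, and $E(\Gamma)$, viewed via its underlying $G'$-edges (reversing the residual edges that came from $E(\Pi)$), together with the cancellation rule defining $\oplus$, yield a set $F := E(\Pi)\oplus E(\Gamma)$ of edges of $G'$.

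Next I would carry out the degree/flow bookkeeping at each vertex. Think of $E(\Pi)$ and $E(\Gamma)$ as unit flows: $E(\Pi)$ carries a flow of value $k$ from $s$ to $t$ in $G'$, and $E(\Gamma)$, reinterpreted on $G'$-edges, is an augmenting update that increases the flow value by $\ell$ (each $\gamma_j$ either adds a new forward edge or removes an edge of some $\pi_i$). The set $F$ is precisely the support of the flow $E(\Pi) + E(\Gamma)$ after cancelling opposite units on the same $G'$-edge; it is a $\{0,1\}$-valued flow of value $k+\ell$, because at every internal vertex in-flow equals out-flow and the per-edge capacity is respected (no $G'$-edge carries two units in the same direction: the edges of $\Gamma$ that are ``new'' lie in $E(G')\setminus E(\Pi)$, and edge-disjointness within $\Pi$ and within $\Gamma$ prevents multiplicity). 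A $\{0,1\}$-flow of value $k+\ell$ whose support is acyclic — or from which we simply peel off $s$-$t$ paths greedily and discard any leftover cycles — decomposes into $k+\ell$ edge-disjoint $s$-$t$ paths; since the total number of edge units is $|F| \le |E(\Pi)| + |E(\Gamma)|$, the path decomposition uses at most that many edges and hence, after removing cycles, yields exactly $k+\ell$ edge-disjoint $s$-$t$ paths in $G'$. (Here I would note, as is standard, that any cycles appearing in $F$ can only decrease the edge count and do not affect the count of $s$-$t$ paths, so they may be removed; in fact with the layered structure of $L$ and the path structure of $\Pi$ one can argue $F$ is already acyclic, but we do not need this.)

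Finally, for the running time I would describe the obvious linear-time procedure: build the multiset $E(\Pi)\cup E(\Gamma)$, for each unordered pair detect whether both $\dart uv$ and $\dart vu$ occur and if so delete both — this is $O(|E(\Pi)|+|E(\Gamma)|)$ using a hash table or by bucketing edges by endpoint — obtaining $F$; then trace paths by starting at $s$ and repeatedly following an unused outgoing edge of $F$ until reaching $t$, marking edges used, which costs $O(|F|)$ total since each edge of $F$ is traversed once; discard any closed walks encountered. This gives the claimed $O(|E(\Pi)|+|E(\Gamma)|)$ bound. The only mildly delicate point — and the part I expect to write most carefully — is the verification that no $G'$-edge receives two flow units in the same direction after combining $E(\Pi)$ and $E(\Gamma)$, so that $F$ is genuinely a $\{0,1\}$-flow; this rests on $\Gamma$ living in the residual graph $R(G',\Pi)$, where a forward copy of an edge already used by $\Pi$ is simply absent. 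Everything else is routine flow-decomposition bookkeeping.
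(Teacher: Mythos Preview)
Your proposal is correct and follows essentially the same approach as the paper: both arguments verify that after cancellation the resulting edge set lies in $G'$, has balanced in/out degree at most $1$ at every internal vertex, and out-degree $k+\ell$ at $s$, and then decompose into paths. The paper phrases this as a direct degree-counting argument rather than in the language of flows and is terser about leftover cycles (which you explicitly handle), but the substance is the same.
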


\begin{proof}
	The 
	definition of $R$ ensures that  
	the edges $E(\Pi)\oplus E(\Gamma)$ all lie in $G'$,
	since for $\dart uv \in E(\Gamma) \setminus E(\Pi)$,
	we must have $\dart vu \in E(\Pi)$. Furthermore,
	every vertex $v$ of $V(G')\setminus \{ s,t \}$ has 
	in-degree and out-degree both $0$ or both $1$ 
	in $E(\Pi)\oplus E(\Gamma)$.  This is clear
	if $v$ appears on at most one path in $\Pi \cup \Gamma$. 
	If $v$
	appears on both a path from $\Pi$ and from $\Gamma$,
	then one incoming edge and one outgoing edge of $v$
	must cancel, since $v$ has at most one incoming
	or outgoing edge in $L$ and the corresponding reverse 
	edge must have appeared on a path in $\Gamma$.
	The in-degree of $s$ is $0$ and the out-degree of $t$ is $0$.
        Moreover, the out-degree of $s$ is $k+\ell$, because 
	the outgoing edges from $s$ never cancel out.
	 This means 
	that $E(\Pi)\oplus E(\Gamma)$ defines $k+\ell$ paths from 
	$s$ to $t$. These paths can be found
	in $O(|E(\Pi)|+|E(\Gamma)|)$ time by constructing the graph 
	$(V(\Pi\cup \Gamma), E(\Pi)\oplus E(\Gamma))$ explicitly.
\end{proof}

A family $\Gamma$ of $s$-$t$ paths in the layered residual graph $L$ 
is \DEF{blocking} if $L - E(\Gamma)$ contains no $s$-$t$ path,
i.e., every $s$-$t$ path in $L$ contains at least one edge
from $E(\Gamma)$.
Even and Tarjan~\cite{EvenT75} describe the following algorithm
for finding a blocking family $\Gamma$ of $s$-$t$ paths in 
a layered residual graph $L$:
we start with $\Gamma=\emptyset$, $D_0=L$, and $j=1$. 
The algorithm proceeds in rounds. In round $j$, 
we perform a DFS traversal from $s$ in $D_{j-1}$.
When we reach $t$, the DFS stack contains a path $\gamma_j$
from $s$ to $t$ in $D_{j-1}\subseteq L$. 
We add the path $\gamma_j$ to $\Gamma$,
and we obtain $D_j$ by removing from $D_{j-1}$ 
all the vertices (other than $s$ and $t$) 
that have been explored during 
the partial DFS traversal of $D_{j-1}$.
We finish when the graph $D_{j-1}$ of the current
round $j$ does not 
contain any $s$-$t$ path.
This is detected during the DFS traversal of $D_{j-1}$.	
We refer to the paper of 
Even and Tarjan~\cite{EvenT75} 
for the running time analysis and the proof of correctness.
The following lemma summarizes the result.

\begin{lemma}[Even and Tarjan~\cite{EvenT75}]

\label{lem:oneiteration}
	Let $L$ be a layered residual graph.
	In $O(|E(L)|)$ time, we can find a blocking family 
	$\Gamma$ of $s$-$t$ paths in $L$.	
\end{lemma}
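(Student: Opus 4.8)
The plan is to argue directly about the round-based DFS procedure described just above the statement, proving two things separately: that it terminates in $O(|E(L)|)$ time, and that the family $\Gamma$ it outputs consists of pairwise edge-disjoint $s$-$t$ paths of $L$ and is blocking. Throughout I would lean on two structural features of a layered residual graph recorded earlier: it is acyclic with every edge going from layer $i-1$ to layer $i$ (so every $s$-$t$ path has exactly $\lambda(t)$ edges, with $\lambda(t)\ge 2$ since $s,t$ are non-adjacent, and meets each intermediate layer exactly once), and every vertex other than $s$ and $t$ has in-degree at most $1$ or out-degree at most $1$.

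For the time bound, I would run each round as a DFS from $s$ that scans each out-edge of each reached vertex exactly once (via a current-edge pointer per vertex), and implement the deletions lazily: flag the visited vertices as removed and simply skip flagged vertices in later rounds. The key is a charging argument showing each edge of $L$ is scanned only $O(1)$ times overall. If $\dart uv$ is scanned in round $j$ then $u$ is reached, hence visited, in round $j$; if $u=s$ then its head $v$ lies in layer $1$, so $v\notin\{s,t\}$ (as $s,t$ are non-adjacent) and $v$ is removed after round $j$; if $u\neq s$ then also $u\neq t$ (no edge of $L$ leaves $t$), so $u$ is removed after round $j$. Either way $\dart uv$ vanishes from $D_j$, so it is scanned in at most one successful round, plus possibly once more in the final path-free round. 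The per-vertex and per-round bookkeeping is dominated, because a DFS from $s$ touches only vertices reachable from $s$, each via a scanned edge, so the whole procedure runs in $O(|E(L)|)$ time.

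For correctness I would prove, by induction on $j$, the invariant that every $s$-$t$ path of $L$ avoiding all edges of $\gamma_1,\dots,\gamma_j$ is still entirely present in $D_j$; applied to the final graph $D_k$, which has no $s$-$t$ path, this shows $L-E(\Gamma)$ has no $s$-$t$ path, i.e.\ $\Gamma$ is blocking. For the inductive step I need two facts about round $j{+}1$'s DFS on $D_j$. First, a vertex that is visited but not on the DFS stack when $t$ is reached — one truly backtracked from — cannot reach $t$ in $D_j$; I would prove this by induction on the order in which such vertices are popped, using acyclicity to see that when a vertex is popped all of its out-neighbours have already been popped, so the property propagates. Second, every $s$-$t$ path of $D_j$ through an interior vertex $g_i$ of $\gamma_{j+1}$ must use an edge of $\gamma_{j+1}$: since $g_i\notin\{s,t\}$, it has a unique in-edge or a unique out-edge in $L\supseteq D_j$, and whichever it is must be the edge of $\gamma_{j+1}$ incident to $g_i$, so any path through $g_i$ traverses it. Combining the two facts, an $s$-$t$ path of $D_j$ avoiding $E(\gamma_{j+1})$ meets neither a backtracked vertex (no $s$-$t$ path of $D_j$ does) nor the interior of $\gamma_{j+1}$, hence avoids exactly the vertices deleted in round $j{+}1$ and survives in $D_{j+1}$. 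Edge-disjointness of the $\gamma_j$'s and the fact that each is an $s$-$t$ path of $L$ fall out immediately: $\gamma_j\subseteq D_{j-1}\subseteq L$, and the interior of $\gamma_j$ is removed before round $j{+}1$, so two distinct $\gamma$'s can share only edges incident to both $s$ and $t$, of which there are none.

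I expect the second structural fact above to be the main obstacle. The step that deletes the whole DFS stack (the interior of $\gamma_{j+1}$) rather than merely the edges of $\gamma_{j+1}$ looks too aggressive at first glance — a rival path seems as though it could sneak through an interior vertex $g_i$ on a side edge and dodge $E(\gamma_{j+1})$ entirely — and the only reason this cannot happen is the in-degree/out-degree-at-most-one property of residual layered graphs, so I would be careful to invoke it explicitly. A secondary subtlety is that the DFS is aborted the moment $t$ is reached, so a stack vertex is ``visited'' without ever being backtracked from; keeping the two kinds of removed vertices — stack interior versus truly popped — cleanly separated is what makes both halves of the inductive step go through.
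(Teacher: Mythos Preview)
The paper does not actually prove this lemma: it describes the round-based DFS procedure and then writes ``We refer to the paper of Even and Tarjan~\cite{EvenT75} for the running time analysis and the proof of correctness.'' So there is nothing to compare against; you have supplied the argument that the paper omits.

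Your proof is correct and is essentially the standard one. Both halves of the correctness argument are right: the ``popped vertices cannot reach $t$'' claim goes through exactly because $L$ is a DAG (so a visited out-neighbour is already popped, never an ancestor), and the ``any $s$-$t$ path through an interior vertex of $\gamma_{j+1}$ uses an edge of $\gamma_{j+1}$'' claim is precisely where the in-/out-degree~$1$ property of the residual layered graph is needed---you were right to flag this as the crux. The edge-disjointness and blocking conclusions then follow as you outline.

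One small presentational wrinkle in the time bound: your charging argument (``$\dart uv$ vanishes from $D_j$, so it is scanned in at most one successful round'') tacitly assumes that scanning $\dart sv$ in round $j$ implies $v$ is visited in round $j$, but with purely lazy deletion $v$ could already be flagged from an earlier round, and then $\dart sv$ would be re-scanned each round. This is exactly what the persistent current-edge pointer at $s$ (which you do mention) prevents, so the bound is fine; just make sure the write-up makes clear that the pointers are \emph{not} reset between rounds, since that, rather than the charging argument, is what bounds the scans of edges out of $s$.
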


The algorithm to find the maximum number of 
edge-disjoint $s$-$t$ paths in $G'$ is the following:
we start with $\Pi_0=\emptyset$.
Then, for $j=1,\dots$, we 
construct the residual graph $R_j=R(G,\Pi_{j-1})$,
the layered residual graph $L_j=L(G,\Pi_{j-1})$, 
a blocking family $\Gamma_j$ of $s$-$t$ paths in $L_j$,
and we set $\Pi_j$ to the set of (edge-disjoint) $s$-$t$ paths
defined by $E(\Pi_{j-1})\oplus E(\Gamma_j)$.
We finish when $L_j$ contains no $s$-$t$ path.
The work performed for a single value of $j$ 
(constructing $L_j$, $R_j$, $\Gamma_j$ and $\Pi_j$),
is called a \DEF{phase}.
Let $\lambda_j(\cdot)$ denote the level of a vertex 
in the residual graph $R_j$. Even and Tarjan~\cite{EvenT75} show
that $\lambda_j(t)$ increases monotonically as a function
of $j$.  Thus, using that the paths $\Gamma_j$
are vertex-disjoint and have length $\lambda_j(t)$ (whenever $L_j$ contains
some $s$-$t$ path), one obtains the following.

\begin{theorem}[Even and Tarjan~\cite{EvenT75}]
\label{thm:EvenT75}
	The algorithm performs at most $O(\sqrt{n})$ phases. 
	When the algorithm finishes,
	$\Pi_{j-1}$ contains the maximum possible number 
	of vertex-disjoint $s$-$t$ paths in $G'$.
\end{theorem}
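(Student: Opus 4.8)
The plan is to follow the analysis of Even and Tarjan~\cite{EvenT75}, treating the two assertions separately. For the correctness assertion, I would view $G'$ as a network with unit edge capacities, so that a family of $j$ edge-disjoint $s$-$t$ paths is exactly an integral $s$-$t$ flow of value $j$. By Lemma~\ref{lem:combination}, a phase turns the current flow $\Pi_{j-1}$ into a flow $\Pi_j$ of value $|\Pi_{j-1}|+|\Gamma_j|$, and $|\Gamma_j|\ge 1$ whenever the layered residual graph $L_j$ — equivalently the residual graph $R_j$ — contains an $s$-$t$ path (Lemma~\ref{lem:oneiteration}). The loop stops exactly when $R_j$ has no $s$-$t$ path, i.e., when there is no augmenting path; the augmenting-path theorem then certifies that $\Pi_{j-1}$ is a maximum $s$-$t$ flow in $G'$, hence a maximum collection of edge-disjoint $s$-$t$ paths in $G'$. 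Since every internal vertex of $G'$ has in-degree or out-degree $1$, in $G'$ edge-disjoint and vertex-disjoint $s$-$t$ paths coincide, and these are in bijection with vertex-disjoint $s$-$t$ paths in $G$; this establishes the correctness assertion.

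For the bound on the number of phases, the first ingredient is the classical Dinitz monotonicity: $\lambda_j(t)$ is strictly increasing with $j$ (over the phases in which $t$ stays reachable). First I would prove $\lambda_j(v)\le\lambda_{j+1}(v)$ for every vertex $v$, by induction on $\lambda_{j+1}(v)$: taking the last edge $\dart uv$ of a shortest $s$-$v$ path in $R_{j+1}$, either $\dart uv\in R_j$, in which case $\lambda_j(v)\le\lambda_j(u)+1\le\lambda_{j+1}(u)+1=\lambda_{j+1}(v)$, or $\dart uv$ was created during phase $j$, which — by tracking how the blocking flow $\Gamma_j$ transforms $R_j$ into $R_{j+1}$, using $E(\Gamma_j)\subseteq E(L_j)$ — forces $\lambda_j(u)=\lambda_j(v)+1$ and hence $\lambda_j(v)<\lambda_{j+1}(v)$. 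Then I would rule out $\lambda_{j+1}(t)=\lambda_j(t)$: under that assumption a shortest $s$-$t$ path $P$ in $R_{j+1}$ would, by a downward induction along $P$ using the pointwise monotonicity just proven, have all its edges in $L_j$; but then the blocking property of $\Gamma_j$ forces $P$ to traverse an edge of $E(\Gamma_j)$, and after phase $j$ any such edge is saturated, so only its reverse survives in $R_{j+1}$ — a contradiction. Hence $\lambda_{j+1}(t)>\lambda_j(t)$, so after $k$ completed phases $\lambda_{k+1}(t)\ge k$ (using also that $s$ and $t$ are non-adjacent).

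The second ingredient bounds the remaining flow: if $f^{*}$ is the maximum flow value in $G'$ and $f_k:=|\Pi_k|$, then $f^{*}-f_k=O(n/k)$. The residual flow — the integral flow of value $f^{*}-f_k$ supported on $R_{k+1}$ obtained from a maximum flow and $\Pi_k$ — decomposes, since capacities are unit, into at least $f^{*}-f_k$ edge-disjoint $s$-$t$ paths in $R_{k+1}$, each of length $\ge\lambda_{k+1}(t)\ge k$. Since BFS edges of $R_{k+1}$ increase the level by at most one, every $s$-$t$ path in $R_{k+1}$ must, for each level index $i$ with $1\le i\le\lambda_{k+1}(t)$, use an edge from level $i-1$ to level $i$. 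Here the vertex-split structure is essential: $R_{k+1}$ is bipartite with the same bipartition as $G'$, and every $v_{\rm in}$ has out-degree at most $1$ in $R_{k+1}$ (regardless of whether $\dart{v_{\rm in}}{v_{\rm out}}$ is saturated), so for even $i$ the set of level-$(i-1)$-to-level-$i$ edges has size at most $|L[i-1]|$, and summing these over the disjoint odd levels below $\lambda_{k+1}(t)$ yields at most $n-2$ in total. By pigeonhole one of these $\Omega(k)$ cuts has size $O(n/k)$, and since each of the $f^{*}-f_k$ edge-disjoint augmenting paths uses an edge of it, $f^{*}-f_k=O(n/k)$.

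It remains to combine the ingredients: taking $k:=\lceil\sqrt{n}\rceil$, after $k$ phases at most $O(\sqrt n)$ units of flow are missing, and each further phase adds at least one unit (Lemma~\ref{lem:oneiteration}), so the algorithm halts within $O(\sqrt n)$ phases. The step I expect to cause the most trouble is the Dinitz monotonicity lemma, specifically the bookkeeping of which residual edges the blocking flow $\Gamma_j$ creates or removes; a close second is checking that in the flow-gap bound the relevant family of cuts really has total size $O(n)$ and not $O(m)$ — which is exactly the place where one must exploit that each $v_{\rm in}$ has out-degree at most $1$ in the residual graph. The remaining bookkeeping is routine.
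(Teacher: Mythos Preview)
The paper does not actually prove this theorem: it is stated with attribution to Even and Tarjan~\cite{EvenT75}, preceded only by the one-line hint that ``$\lambda_j(t)$ increases monotonically as a function of $j$'' and that ``the paths $\Gamma_j$ are vertex-disjoint and have length $\lambda_j(t)$''. Your proposal is a correct and complete reconstruction of the classical argument, and it is fully consistent with that hint.

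One small stylistic difference is worth noting. The paper's hint suggests bounding the residual flow by observing that the augmenting paths in $R_{k+1}$ are \emph{vertex}-disjoint (since every internal vertex of $R_{k+1}$ has in- or out-degree at most $1$) and each has at least $\lambda_{k+1}(t)-1$ internal vertices, giving $f^{*}-f_k \le (2n-4)/(\lambda_{k+1}(t)-1)$ directly. You instead count edges across the odd-to-even layer cuts, using that every $v_{\rm in}$ has out-degree at most $1$ in $R_{k+1}$, and apply pigeonhole. Both arguments exploit the same degree-$1$ property of the split graph and yield the same $O(n/k)$ bound; yours is a mild variant, not a different route. Everything else---the strict monotonicity of $\lambda_j(t)$, the termination criterion via the augmenting-path theorem, and the final $\sqrt{n}+\sqrt{n}$ combination---matches the standard proof the paper is citing.
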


\subsection{Adaptation for neighbor queries}

We want to adapt the algorithm from Section~\ref{sec:generic}
to our geometric setting.
For this, we extend the approach by 
Efrat, Itai, and Katz~\cite{EfratIK01} for finding maximum matchings
in bipartite geometric intersection graphs.
The idea is to avoid the explicit construction
of the layered residual graphs $L_j = L(G, \Pi_{j-1})$, and
to use instead an implicit representation that allows
for an efficient DFS traversal of the current $L_j$.
For this, we identify which vertices belong to each layer
of the current $L_j$,
and we use dynamic nearest-neighbor data structures to find 
the directed edges between the layers.
In order to encapsulate the geometric primitives,
we assume that we have
a certain geometric data structure to access the directed edges of $G$.
Note that the assumption is on the original graph $G$, 
not in the transformed graph $G'$. Later, we will describe how
such a data structure can be derived from known results
about (semi-)dynamic nearest neighbor searching.

\begin{graphencoding}
\label{ds:encoding edges}
	Let $G$ be a directed graph with $n$ vertices.
	We assume that we have a data structure $\DS=\DS(U)$ 
	that semi-dynamically maintains
	a subset $U\subseteq V(G)$ with the following operations:
	\begin{itemize}
		\item construct the data structure $\DS(U)$ for an initial
		  subset $U \subseteq V(G)$ of vertices from $G$. The
		  construction time is denoted by $T_c(m)$, where
		  $m$ is the number of vertices in $U$, and we require
		  that $T_c(\cdot)$ satisfies 
		  $T_c(m) + T_c(m') \leq T_c(m + m')$, for all $m, m' \in \NN$;
		\item delete of a vertex $u \in U$ from $\DS(U)$.
		The deletion time is denoted by $T_d(n)$, where $n$
		refers to the number of vertices in $G$; and
		\item given a query vertex $v\in V(G)$ from $G$,
		find an outgoing edge 
		$\dart{v}{u}$ with $u \in U$, 
		or report that no such vertex exists in the current set $U$.
		The query time is denoted by $T_q(n)$,
		where $n$ refers to the number of vertices in $G$.
	\end{itemize}
\end{graphencoding}

Henceforth, we assume our $n$-vertex graph $G$ can be accessed 
as in 
Graph Encoding~\ref{ds:encoding edges}. As before,
we denote the corresponding transformed
graph by $G'$. First, we show how to find the
levels in the layered residual graph.
	
\begin{lemma}
\label{lem:levels}
	Let $\Pi$ be  a set of edge-disjoint paths 
	in the transformed graph $G'$.
	In time $O(T_c(n) + n T_q(n) + n T_d(n))$, we can
	find the level $\lambda(v)$ 
	of each vertex $v \in V(G')$ in the layered 
	residual graph $L=L(G',\Pi)$.
\end{lemma}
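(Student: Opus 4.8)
The plan is to run a BFS from $s$ in the residual graph $R = R(G', \Pi)$, but without ever materializing $R$ (or $G'$) explicitly; instead, every time the BFS needs to scan the out-neighbors of a vertex, I answer those requests with the $\DS$ structure of Graph Encoding~\ref{ds:encoding edges}, which lives on the \emph{original} graph $G$. The first step is to recall the structure of $R$: every vertex $v \in V(G)\setminus\{s,t\}$ gives rise to $v_{\rm in}$ and $v_{\rm out}$ in $G'$, and in $R$ the edges incident to this pair are of three kinds — the "internal" edge $\dart{v_{\rm in}}{v_{\rm out}}$ (present in $R$ iff $\dart{v_{\rm in}}{v_{\rm out}} \notin E(\Pi)$, i.e.\ iff $v$ is not used by any path of $\Pi$) or its reverse $\dart{v_{\rm out}}{v_{\rm in}}$ (present iff $v$ is used), plus the "external" edges $\dart{u_{\rm out}}{v_{\rm in}}$ coming from original arcs $\dart{u}{v} \in E(G)$. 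Since $\Pi$ is given, I can in $O(|E(\Pi)|) = O(n)$ time mark which original vertices are saturated by $\Pi$ and record, for each saturated $v$, the (unique) predecessor and successor of $v$ along its path; call these $\prev(v)$ and $\next(v)$. This lets me evaluate the internal edges and the $\Pi$-edges in $O(1)$ each.

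Next I set up the BFS itself. I maintain one $\DS$ instance over a set $U$ of "not-yet-reached" original vertices, initialized to all of $V(G)$ (cost $T_c(n)$); the invariant is that $u \in U$ exactly when neither $u_{\rm in}$ nor $u_{\rm out}$ has been assigned a finite level yet. I process the BFS queue layer by layer, computing $\lambda(\cdot)$ for the two copies of each original vertex. The only non-trivial case is expanding a vertex of the form $u_{\rm out}$: its out-neighbors in $R$ are (i) $u_{\rm in}$, if $u$ is saturated (via the reversed $\Pi$-edge), handled directly; and (ii) all $v_{\rm in}$ with $\dart{u}{v} \in E(G)$ — but among these, the ones that matter are precisely those whose $v_{\rm in}$ has not yet been reached, i.e.\ those with $v \in U$. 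So I repeatedly query $\DS(U)$ for an outgoing edge $\dart{u}{v}$ with $v \in U$; each returned $v$ gets $\lambda(v_{\rm in}) := \lambda(u_{\rm out}) + 1$, is enqueued, and is deleted from $U$; I stop querying $u_{\rm out}$ when $\DS$ reports that no such $v$ remains. Expanding a vertex $v_{\rm in}$ is trivial: its only possible out-neighbor in $R$ is $v_{\rm out}$ (present iff $v$ is unsaturated), with $s$ and $t$ as the obvious special cases ($s$ behaves like "$s_{\rm out}$", $t$ has no successors). Once the BFS is complete, I read off $\lambda(t)$; every vertex never dequeued keeps $\lambda(\cdot) = +\infty$. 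Note the layers $L[i]$ and hence the layered residual graph $L(G',\Pi)$ are determined by these levels, so computing $\lambda(\cdot)$ is exactly what the lemma asks for.

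For the running time: the construction of $\DS(V(G))$ costs $T_c(n)$; the $O(n)$ preprocessing of $\Pi$ is absorbed. Each original vertex is deleted from $U$ at most once, for a total of $O(n)\cdot T_d(n)$. The number of $\DS$ queries is bounded by counting: each successful query removes one vertex from $U$, so there are at most $n$ of those over the whole run; and each vertex $u_{\rm out}$ issues exactly one \emph{unsuccessful} query (the one that terminates its expansion), so at most $n$ of those as well. Hence $O(n)$ queries in total, costing $O(n)\cdot T_q(n)$, and the bookkeeping (queue operations, evaluating internal/$\Pi$-edges) is $O(n)$ overall. This gives the claimed $O(T_c(n) + n\,T_q(n) + n\,T_d(n))$ bound. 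The one place that needs care — and the step I'd expect to be the main subtlety rather than the main difficulty — is arguing that restricting attention to $v \in U$ does not miss any BFS edge that could give a \emph{smaller} level: this is the standard fact that in BFS an edge into an already-reached vertex never improves its distance, combined with the observation that the two copies $v_{\rm in}, v_{\rm out}$ are entered from disjoint sides ($v_{\rm in}$ only from $\Pi$-edges or external arcs, $v_{\rm out}$ only from $v_{\rm in}$), so tying "$v \in U$" to "$v_{\rm in}$ unreached" is sound. One should also double-check the boundary handling of $s$ and $t$, which are not split in $G'$.
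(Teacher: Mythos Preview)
Your BFS sketch has a genuine gap in the case analysis for $v_{\rm in}$ when $v$ is saturated. You enumerate the edges of $R$ incident to the pair $(v_{\rm in},v_{\rm out})$ as ``three kinds'': the internal edge (one direction or the other) plus the external edges $\dart{u_{\rm out}}{v_{\rm in}}$. This omits the \emph{reversed external} edges: if $v$ lies on a path of $\Pi$ then $\dart{\prev(v)_{\rm out}}{v_{\rm in}} \in E(\Pi)$, so in $R$ this edge is reversed to $\dart{v_{\rm in}}{\prev(v)_{\rm out}}$. Hence your claim ``its only possible out-neighbor in $R$ is $v_{\rm out}$ (present iff $v$ is unsaturated)'' is false; when $v$ is saturated, $v_{\rm in}$ has exactly one out-neighbor in $R$, namely $\prev(v)_{\rm out}$ (or $s$). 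The paper's proof handles this case explicitly (Figure~\ref{fig:flow3}), and it is essential: these are precisely the edges along which an augmenting path ``rewinds'' a current flow path.

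The omission is not cosmetic. In your BFS, a saturated vertex's out-copy $u_{\rm out}$ has a unique $R$-predecessor $\next(u)_{\rm in}$, and since $\next(u)$ is itself saturated, your expansion rule for $\next(u)_{\rm in}$ produces nothing --- so $u_{\rm out}$ is never reached. Concretely, take $G$ with arcs $s\to a$, $a\to b$, $b\to t$, $s\to c$, $c\to b$, $a\to d$, $d\to t$ and $\Pi=\{s\to a\to b\to t\}$: the correct BFS in $R$ reaches $t$ at level $7$ via $s,c_{\rm in},c_{\rm out},b_{\rm in},a_{\rm out},d_{\rm in},d_{\rm out},t$, whereas your procedure stalls after $b_{\rm in}$ and declares $t$ unreachable. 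Two smaller points follow from the same omission: (i) when expanding a saturated $u_{\rm out}$, the arc $\dart{u_{\rm out}}{\next(u)_{\rm in}}$ is in $E(\Pi)$ and hence \emph{not} in $R$, so you must argue (as the paper does) that $\next(u)$ has already been deleted from $U$ before you query $\DS$; (ii) likewise, expanding $s$ must exclude the first vertex of each path in $\Pi$. Once you add the $\dart{v_{\rm in}}{\prev(v)_{\rm out}}$ case and these two checks, your argument matches the paper's.
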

\begin{proof}
	Our goal is to perform a BFS in the residual graph $R = R(G', \Pi)$ 
	without explicitly constructing the edge set of $R$.
	In a preprocessing phase, for every vertex $v$ in 
	$V(G') \setminus \{s, t\}$ that appears
	in some path of $\Pi$, we mark $v$ and store the unique 
	vertices 
	$\prev(v)$ and $\next(v)$ such that	
	$\dart{\prev(v)}{v}$ and $\dart{v}{\next(v)}$ are 
	directed edges in $E(\Pi)$.
	This takes time $O(|E(\Pi)|) = O(n)$.
	
	Next, we set $L[0]=\{ s \}$, construct the data structure $\DS$ of 
	Graph Encoding~\ref{ds:encoding edges} for $V(G) \setminus \{s\}$. 
	Thus, the current vertex
	set $U$ in $\DS$ is initially
	$U = V(G)\setminus \{ s\}$.
	In our algorithm, we iteratively compute the layers 
	$L[i]$, for $i=1,2,\dots$.
	In the process, we maintain the invariant that, after computing $L[i]$,
	the structure $\DS$ contains $t$ and 
	the vertices $u$ in $V(G)$ for which we do not 
	yet know the level $\lambda(u_{\rm in})$ in
	$L(G', \Pi)$.
	
	\begin{figure}
		\centering
		\includegraphics[page=3]{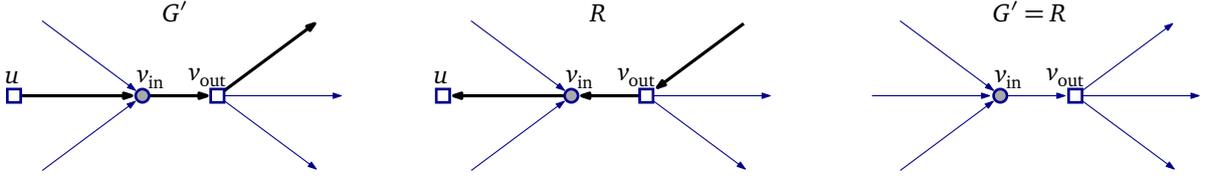}
		\caption{Case in the proof of Lemma~\ref{lem:levels}:
				$v_{\rm in}\in L[i-1]$, for $i$ even.
				The left and center figure show $G'$ and $R$ 
				when $v_{\rm in}$ belongs
				to some path of $\Pi$ (bold).
				The right figure shows $G'=R$ when 
				$v_{\rm in}$ does not belong
				to any path of $\Pi$.}
		\label{fig:flow3}
	\end{figure}
	\begin{figure}
		\centering
		\includegraphics[page=4,width=\textwidth]{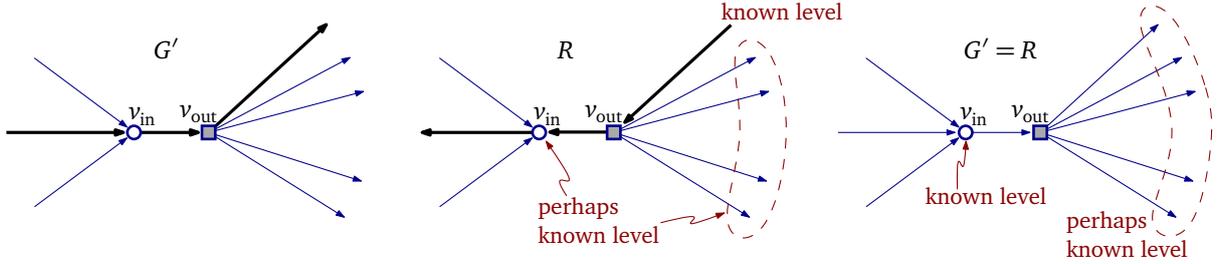}
		\caption{Case in the proof of Lemma~\ref{lem:levels}:
				$v_{\rm out}\in L[i-1]$, for $i$ odd.
				The left and center figure 
				show $G'$ and $R$ when $v_{\rm out}$ belongs
				to some path of $\Pi$ (bold).
				The right figure shows $G'=R$ when 
				$v_{\rm out}$ does not belong
				to any path of $\Pi$.}
		\label{fig:flow4}
	\end{figure}

	To find $L[1]$, we repeatedly query $\DS$ with $s$ 
	and remove from $\DS$ the reported item, until $\DS$
	contains no further out-neighbors of $s$. This gives the set
	\[
	U'=\{ u \in V(G)\setminus \{ s\}\mid \dart su\in E(G) \}
	\]
	of all out-neighbors of $s$ in $G$.
	Let $U'_{\rm in}= \{ u_{\rm in}\mid u \in U\}$ be the 
	set of corresponding out-neighbors of $s$ in $G'$.
	We filter $U'_{\rm in}$ and remove those vertices $v$ that are 
	in some path of $\Pi$ and have $\prev(v) = s$.
	This gives a set $U''_{\rm in}$ with $L[1]= U''_{\rm in}$.
	For each vertex $u_{\rm in}\in U''_{\rm in}$,
	we set $\lambda(u_{\rm in})=1$.
	For each vertex $u_{\rm in}\in U'_{\rm in}\setminus U''_{\rm in}$,
	the level of $u_{\rm in}$ in $L$ is not yet known.
	If $\DS$  supported insertions,
	we would insert	the vertices $u$ with 
	$u_{\rm in}\in U'_{\rm in}\setminus U''_{\rm in}$ back into $\DS$.
	Instead, we just construct the data structure $\DS$
	\emph{anew} for $V(G)\setminus (\{ s\}\cup \{ u\mid u_{\rm in}\in U''_{\rm in}\})$.
	
	Then, for $i = 2,\dots$, while $L[i-1]$ is not empty and $L[i-1]$ 
	does not contain $t$, 
	we compute $L[i]$.
	If $i$ is even, we iterate over the vertices $v_{\rm in}$ of $L[i-1]$;
	see Figure~\ref{fig:flow3}.
	The vertex $v_{\rm in}$ has one outgoing edge in $L$:
	if $v_{\rm in}$ does not lie on some path of $\Pi$,
	then $L$ contains only the outgoing edge 
	$\dart {v_{\rm in}} v_{\rm out}$; if $v_{\rm in}$ lies
	on some path of $\Pi$, then $L$ contains only the outgoing edge
	$\dart {v_{\rm in}} \prev(v_{\rm in})$.
	If $v_{\rm in}$ does not belong to any path of $\Pi$,
	we set $\lambda(v_{\rm out}) = i$ and add $v_{\rm out}$ to $L[i]$.
	(In this case, the only incoming edge to $v_{\rm out}$ in 
	the residual graph is from $v_{\rm in}$, 
	so we know that $\lambda(v_{\rm out})$ was not yet determined.)
	If $v_{\rm in}$ belongs to some path of $\Pi$, we 
	set $u=\prev(v_{\rm in})$ and distinguish two cases. 
	If $u=s$, we do not need to do anything because $\lambda(s)$ is already set. 
	If $u\neq s$, we set $\lambda(u)=i$ and add $u$ to $L[i]$.
	(In this case, $u = w_{\rm out}$ for some vertex $w\in V(G)\setminus \{ s,t\}$
	and $\lambda(u)$ was not yet determined because $\dart{v_{\rm in}}w_{\rm out}$ 
	is the only incoming edge to $w_{\rm out}$ in the residual graph.)
	
	If $i$ is odd, we 
	iterate over the vertices $v_{\rm out}$ of $L[i-1]$;
	see Figure~\ref{fig:flow4}.
	If the vertex $v_{\rm out}$ does not lie on some path of $\Pi$,
	the outgoing edges of $v_{\rm out}$ in $L$ correspond
	to the outgoing edges of $v_{\rm out}$ in $G'$; if 
	$v_{\rm out}$ lies on some path of $\Pi$, then the outgoing
	edge $\dart {v_{\rm out}} \next(v_{\rm out})$ in $G'$ is
	replaced with the outgoing edge $\dart {v_{\rm out}} v_{\rm in}$ in $R$.
	We proceed as follows: we query $\DS$ 
	repeatedly with $v$ and delete the reported items.
	This gives the set $U'$ of vertices $u \in V(G)$ 
	that are stored in $\DS$ and have $\dart vu \in E(G)$.
	Due to the invariant, the set $U'$ contains exactly those out-neighbors
	$u$ of $v$ in $G$ such that $\lambda(u_{\rm in})$ 
	was not known before processing $v_{\rm out}$. 
	If $v_{\rm out}$ lies on some path of $\Pi$, 
	then we already know the level of $w_{\rm in} = \next(v_{\rm out})$ 
	(it is $i-2 $) because $\dart{w_{\rm in}}{v_{\rm out}}$ is the only incoming
	edge to $v_{\rm out}$ in the residual graph, and
	therefore $w\notin U'$.
	For each $u\in U'$, we set $\lambda(u_{\rm in})=i$ and add 
	$u_{\rm in}$ to $L[i]$.	
	If $v_{\rm out}$ belongs to some path of $\Pi$,
	we check if $v_{\rm in}$ still has no level assigned,
	and if so, we set $\lambda(v_{\rm in})=i$,
	add $v_{\rm in}$ to $L[i]$, and delete $v$ from $\DS$.

	We finish when $t\in L[i]$ or when $L[i]$ is empty. 
	In the latter case, $t$ cannot be reached from $s$ in $R$, 
	and therefore $\Pi$ already contains a maximum number 
	of vertex-disjoint $s$-$t$ paths. In the former case, 
	we remove all elements from $L[\lambda(t)]$ except for $t$. 
	
	To bound the running time, we note first that
	it takes $O(T_c(n))$ time to construct the 
	data structure $\DS$, and this is done twice.
	Next, we observe that every node $u$ of $G$ is 
	deleted at most once from $\DS$.
	Additionally, each query with a vertex of $G$ in $\DS$ 
	leads either to a deletion in $\DS$
	or does not yield an out-neighbor of the vertex, but the latter happens
	at most once per vertex of $G$.
	Thus, in total we are making $O(n)$ queries and deletions in the 
	data structure $\DS$. The time bound follows.
\end{proof}

The next lemma shows how to find an actual blocking family in
$L$.

\begin{lemma}
\label{lem:oneiterationgeometric}
	Consider a set $\Pi$ of edge-disjoint paths in $G'$.
    In $O(T_c(n)+ n T_q(n) + n T_d(n))$ time, we can
	find a blocking family of $s$-$t$ paths in the 
	layered residual graph $L$.
\end{lemma}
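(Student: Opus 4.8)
The plan is to simulate, without ever materialising $L$, the DFS-based blocking-family procedure of Even and Tarjan (Lemma~\ref{lem:oneiteration}). That procedure works on a shrinking graph $D_0 = L \supseteq D_1 \supseteq \cdots$: in round $j$ it runs a DFS from $s$ in $D_{j-1}$, and every vertex it explores is eventually deleted, either because it ends up on the discovered path $\gamma_j$ or because the DFS backtracks out of it. The only operation that is not mere bookkeeping is the primitive: given the current DFS stack top $x$, at level $i = \lambda(x)$, produce a still-present out-neighbour of $x$ in $L$ at level $i+1$, or report that none exists. I will support this primitive in amortised $O(T_q(n) + T_d(n))$ time after $O(T_c(n))$ preprocessing; together with the $O(T_c(n) + nT_q(n) + nT_d(n))$ call to Lemma~\ref{lem:levels} and $O(n)$ further bookkeeping, this yields the claimed bound.

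First I would call Lemma~\ref{lem:levels} on $\Pi$ to obtain the layers $L[0], L[1], \ldots, L[\lambda(t)] = \{t\}$, compute $\prev(v),\next(v)$ for every vertex on a path of $\Pi$ in $O(n)$ time, and keep a boolean array over $V(G')$ marking which vertices are still present. Recall that the parity of a layer fixes the vertex type: $L[0] = \{s\}$, odd layers hold vertices $v_{\rm in}$ (and $t$ at layer $\lambda(t)$), even layers $\geq 2$ hold vertices $v_{\rm out}$. A vertex $v_{\rm in}$ has a single $L$-out-edge, so it needs no data structure; branching happens only at $s$ and at the $v_{\rm out}$'s, whose $L$-out-edges land on vertices $u_{\rm in}$ coming from edges of $G$. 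Hence, for every odd $i < \lambda(t)$ I would build a copy $\DS_i$ of the structure of Graph Encoding~\ref{ds:encoding edges} for the set $\{\, u \in V(G) : u_{\rm in} \in L[i]\,\}$, and set $\DS_{\lambda(t)} := \DS(\{t\})$ (there is no $L$-edge into a level-$\lambda(t)$ vertex other than $t$). These vertex sets are pairwise disjoint with total size at most $n$, so by the superadditivity of $T_c$ the total construction time is $O(T_c(n))$.

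Then I would run the round-based DFS, restarting from $s$ whenever a path is completed and never re-entering a vertex marked absent. When the stack top $x$ is a vertex $v_{\rm in}$ at level $i$, its unique $L$-out-edge points to $v_{\rm out}$ (if $v_{\rm in}$ lies on no path of $\Pi$) or to $\prev(v_{\rm in})$ (otherwise), and one tests in $O(1)$ whether that target is present and in $L[i+1]$. When $x$ is $s$ or a vertex $v_{\rm out}$ at level $i$, query $\DS_{i+1}$ with $s$ resp.\ with $v$; a returned vertex $u$ satisfies $\dart{v}{u} \in E(G)$, $u_{\rm in} \in L[i+1]$, and $u_{\rm in}$ present, and since the only edges of $G'$ leaving $x$ that are absent from $R$ are path edges of $\Pi$, whose heads lie at a level $\neq i+1$ (for $v_{\rm out}$ the head $\next(v_{\rm out})$ sits at level $i-1$, so $\DS_{i+1}$ never returns it), $\dart{x}{u_{\rm in}}$ is a genuine $L$-edge and we push $u_{\rm in}$; if the query fails and $v_{\rm out}$ lies on a path of $\Pi$, we additionally test the extra residual edge $\dart{v_{\rm out}}{v_{\rm in}}$. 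If the primitive reports no out-edge, we mark $x$ absent, delete its $G$-vertex from $\DS_i$ when $i$ is odd, and pop; if the stack reaches $t$, we record the stack as a new path of $\Gamma$, mark absent and delete all of its internal vertices, and restart; if $s$ has no out-edge we stop and return $\Gamma$. Correctness is immediate from Lemma~\ref{lem:oneiteration}, as this is a faithful execution of Even and Tarjan's algorithm on $L$. For the time bound, each vertex of $V(G')$ is pushed at most once over the whole run (once popped it stays absent), giving $O(n)$ pushes and pops; each $\DS$-query either causes a push or is the final, failing query before a pop, giving $O(n)$ queries; and each deletion accompanies a vertex becoming absent, giving $O(n)$ deletions. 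Adding the cost of Lemma~\ref{lem:levels} and the $O(T_c(n))$ preprocessing proves the bound.

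The step I expect to be the real obstacle is certifying that the $\DS$-based primitive is simultaneously sound and complete. Soundness needs that $\DS_{i+1}$ can never return a vertex $u$ for which $\dart{v_{\rm out}}{u_{\rm in}}$ fails to be an $L$-edge — the only candidate offender being $u = $ the vertex of $\next(v_{\rm out})$, which one must pin at level $i-1$ using that its reversed path edge is the unique residual in-edge of $v_{\rm out}$. Completeness needs that every $L$-out-edge of $v_{\rm out}$ (and of $s$) is eventually offered, including the reversed residual edge $\dart{v_{\rm out}}{v_{\rm in}}$ that appears exactly when $v_{\rm out}$ lies on a path of $\Pi$, and the special handling of layer $\lambda(t)$, where only $t$ survives in $L$. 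Carrying out this residual-graph incidence analysis case by case, while keeping each $\DS_i$ exactly synchronised with the set of present level-$i$ vertices across all rounds, is where the work of the proof concentrates; once that is done, the amortised accounting above and the invocation of Lemma~\ref{lem:oneiteration} finish it.
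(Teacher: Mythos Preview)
Your proposal is correct and follows essentially the same approach as the paper: compute the layers via Lemma~\ref{lem:levels}, build one copy of the Graph~Encoding~\ref{ds:encoding edges} structure per odd layer, simulate the Even--Tarjan DFS using these structures for the branching steps at $s$ and at the $v_{\rm out}$ vertices, handle the single out-edge of each $v_{\rm in}$ and the reversed edge $\dart{v_{\rm out}}{v_{\rm in}}$ by direct lookup, and amortise to $O(n)$ queries and deletions. The soundness argument you single out---that a query to $\DS_{i+1}$ from $v_{\rm out}$ cannot return the head of $\next(v_{\rm out})$ because that vertex necessarily sits at level $i-1$---is exactly the argument the paper gives; the only cosmetic differences are that the paper deletes a vertex from its $\DS[\cdot]$ immediately upon discovery rather than upon backtrack, and checks the reversed edge $\dart{v_{\rm out}}{v_{\rm in}}$ before rather than after querying $\DS[i+1]$.
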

\begin{proof}
	Using Lemma~\ref{lem:levels}, we compute the 
	level $\lambda(v)$ of each vertex $v$ of $G'$. 
	Recall the notation $\prev(v)$ and $\next(v)$
	from the proof of Lemma~\ref{lem:levels} 
	to denote the predecessor and successor
	of a vertex $v$ on a path of $\Pi$.
	We adapt the algorithm in the proof
	of Lemma~\ref{lem:oneiteration}, which is based 
	on a DFS traversal of $L$.
	
	For each odd $i$ with $1 \le i \le \lambda(t)$, we 
	build a data structure $\DS[i]$ as in Graph 
	Encoding~\ref{ds:encoding edges} for the set 
	$V[i] = \{ v\in V(G)\mid v_{\rm in}\in L[i]\}$.
	This takes $\sum_i T_c(|V[i]|) \le O(T_c(n))$ time 
	because the sets $V[i]$ are pairwise disjoint.
	During the algorithm, the data structure $\DS[i]$ will 
	contain the vertices $v \in V[i]$ 
	such that $v_{\rm in}$ has not yet been explored by the 
	DFS traversal. Thus, in contrast to the approach in 
	Lemma~\ref{lem:oneiteration},
	we delete vertices as we explore them with the DFS traversal.

	When we explore a vertex $v_{\rm in}$ (at odd level $i$), 
	there are two options; see Figure~\ref{fig:flow3}.
	If $v_{\rm in}$ lies on some path of $\Pi$, 
	we look at $u=\prev(v_{\rm in})$. If $u$ has been explored already,
	we return\footnote{This happens only if $u=s$, as in any other case 
	$u = w_{\rm out}$ for some vertex $w\in V(G)\setminus \{ s,t\}$
	and $\dart{v_{\rm in}}w_{\rm out}$ is the only incoming edge to 
	$w_{\rm out}$ in the residual graph and thus in the layered residual graph.}.
	Otherwise, we continue the DFS traversal at $u$.
	If $v_{\rm in}$ does not belong to any path of $\Pi$,
	then $v_{\rm out}$ has not been explored yet, as 
	$\dart{v_{\rm in}}{v_{\rm out}}$
	is the only incoming edge of $v_{\rm out}$,
	so we continue the DFS at $v_{\rm out}$.
	For each such vertex, we spend $O(1)$ time plus the time
	for the recursive calls, if they occur.
	
	Consider now the case that we explore a vertex $v_{\rm out}$, 
	at even level $i$; see Figure~\ref{fig:flow4}.
	If $i = \lambda(t)-1$, we check
	whether the edge $\dart vt$ belongs to $G\setminus E(\Pi)$. 
	If so, we have found an $s$-$t$
	path $\gamma$ in $L$. We add $\gamma$ to the output,
	and restart the DFS traversal from $s$.
	If not, we return from the recursive call.
	
	Consider the remaining case: we explore a vertex $v_{\rm out}$
	at even level $i$ and $i<\lambda(t)-1$.
	If $v_{\rm out}$ belongs to some path of $\Pi$,
	$v_{\rm in}$ has not been explored yet, and 
	$\lambda(v_{\rm in}) = \lambda(v_{\rm out}) + 1$,\footnote{In the
	journal version of this article, this third condition
	is missing. However, it is necessary for the algorithm to
	be correct. We thank Matej Marinko for pointing this out.}
	we recursively explore $v_{\rm in}$ and remove $v$ from $\DS[i+1]$.
	If $v_{\rm out}$ does not belong to any path of $\Pi$
	or we have returned from the exploration of $v_{\rm in}$,
	we explore the outgoing edges from $v_{\rm out}$ 
	to $L[i+1]$ by repeating the following procedure.
	We query $\DS[i+1]$ with $v$ 
	to obtain an edge $\dart vu$ of $G$ such that 
	$u_{\rm in}\in L[i+1]$, we remove $u$ from $\DS[i+1]$,
	and we continue the DFS traversal from $u_{\rm in}$.
	The recursive call is correctly made along an edge of the layered residual graph
	because it \emph{cannot} happen that $\dart{v_{\rm out}}{u_{\rm in}}$ is an edge of $\Pi$;
	indeed, if $\dart{v_{\rm out}}{u_{\rm in}}$ were an edge in $\Pi$,
	then in the residual graph the edge $\dart{u_{\rm in}}{v_{\rm out}}$
	would be the only edge incoming into $v_{\rm out}$, which would mean that
	in the DFS traversal we arrived to $v_{\rm out}$ from $u_{\rm in}$,
	and $u$ would belong to $V[i-1]$ instead of $V[i+1]$.
	When the query to $\DS[i+1]$ with $v$ returns an empty answer,
	we return from the recursive call at $v_{\rm out}$.
	
	Every vertex $u$ of $V[i]$, for $i$ odd, is returned and removed from $\DS[i]$
	at most once.
	Thus, each vertex of $V(G)$ is deleted exactly once from
	exactly one data structure $\DS[i]$.
	Furthermore, for every vertex $v$ of $V(G)$, we make 
	at most one query to the corresponding data 
	structure $\DS[\cdot]$ that returns an empty answer.
	Thus, the running time is $O(n+ T_c(n) + n T_q(n) + n T_d(n))$.
\end{proof}

The following lemma discusses how to find a minimum cut from
a maximum family of $s$-$t$ vertex disjoint paths.

\begin{lemma}
\label{lem:cut}
	Let $\Pi$ be a maximum family of $s$-$t$ vertex disjoint 
	paths (in $G$ or in $G'$). Given $\Pi$, we 
	can obtain a minimum $s$-$t$ cut in $O(T_c(n)+ n T_q(n) + n T_d(n))$ time.
\end{lemma}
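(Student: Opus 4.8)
The plan is to read the cut off the reachability structure of the final residual graph. Assume that $\Pi$ is given as paths in $G'$ (translating a family of paths from $G$ to $G'$ costs $O(n)$ time). Since $\Pi$ is a \emph{maximum} family, the termination condition of the overall algorithm (Theorem~\ref{thm:EvenT75}) guarantees that $t$ is unreachable from $s$ in the residual graph $R=R(G',\Pi)$. First I would run exactly the BFS procedure of Lemma~\ref{lem:levels}: when $\lambda(t)=+\infty$, that procedure simply runs until the layers are exhausted, and thereby it computes $\lambda(v)$ for every $v\in V(G')$, i.e.\ it identifies the set $S=\{v\in V(G'):\lambda(v)<+\infty\}=\bigcup_i L[i]$ of vertices reachable from $s$ in $R$. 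Its running-time analysis is unaffected, so this costs $O(T_c(n)+nT_q(n)+nT_d(n))$; as a by-product we also have, in $O(n)$ extra time, the functions $\prev(\cdot)$ and $\next(\cdot)$ recording path-neighbours along $\Pi$.

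Next I would form the cut: for each $\pi\in\Pi$, traverse $\pi$ from $s$ towards $t$, let $v(\pi)$ be the first vertex $v$ on $\pi$ with $v_{\rm out}\notin S$, and set $X=\{v(\pi):\pi\in\Pi\}$; this scan takes $O(n)$ time in total. The correctness hinges on one structural observation, which I expect to be the main point: along any $\pi\in\Pi$, the set of vertices lying in $S$ is a \emph{prefix} of $\pi$, because every edge of $\pi$ belongs to $E(\Pi)$ and hence its reverse belongs to $E(R)$, so from any reachable vertex of $\pi$ one can walk backwards along $\pi$ to $s$. Since $t\notin S$, each $\pi$ has a single transition edge from $S$ to $V(G')\setminus S$; these transition edges are precisely the edges of $G'$ going from $S$ to its complement, hence they form an $s$-$t$ edge cut $C$ of $G'$ with $|C|=|\Pi|$. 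The subtle point is that a transition edge can never be an ``external'' edge $\dart{v_{\rm out}}{w_{\rm in}}$ nor an edge $\dart{v_{\rm out}}{t}$ into $t$: in $G'$ the vertex $v_{\rm out}$ has in-degree one and lies on a single path of $\Pi$ (the one carrying this edge), so its \emph{only} in-edge in $R$ is the reverse of that transition edge; hence $v_{\rm out}\in S$ would force its head — which we assumed lies outside $S$ — to lie in $S$ as well, a contradiction. Thus every transition edge is an internal edge $\dart{v_{\rm in}}{v_{\rm out}}$ (with $v_{\rm in}\in S$, $v_{\rm out}\notin S$) or the initial edge $\dart{s}{v_{\rm in}}$ (with $v_{\rm in}\notin S$); in either case $v(\pi)=v\in V(G)\setminus\{s,t\}$, and the $v(\pi)$ are distinct because the paths of $\Pi$ are interior-vertex-disjoint, so $|X|=|\Pi|$.

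Finally, $X$ is an $s$-$t$ cut of $G$: any $s$-$t$ path $Q$ in $G$ maps to an $s$-$t$ path in $G'$, which must use some edge of the edge cut $C$, and as just argued that edge is $\dart{v_{\rm in}}{v_{\rm out}}$ or $\dart{s}{v_{\rm in}}$ for some $v=v(\pi)\in X$, so $Q$ passes through $v$. Since $\Pi$ is a maximum family, Menger's theorem gives that the minimum $s$-$t$ cut of $G$ has size $|\Pi|$; as $|X|=|\Pi|$, the set $X$ is a minimum cut. Everything beyond the single reachability computation is $O(n)$-time bookkeeping, which yields the claimed running time.
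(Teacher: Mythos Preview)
Your proof is correct and follows essentially the same approach as the paper: compute reachability in $R(G',\Pi)$ via Lemma~\ref{lem:levels}, then read off one cut vertex per path of $\Pi$ at the point where the path leaves the reachable set. The paper phrases the extraction as a per-vertex test (``$u_{\rm in}\in A$, $u_{\rm out}\notin A$'' or ``$\prev(u_{\rm in})\in A$, $u_{\rm in}\notin A$'') rather than a path scan, but this yields the same set. Your write-up is in fact more detailed: you prove the prefix property and show that a transition edge of the form $\dart{v_{\rm out}}{w_{\rm in}}$ cannot occur (so the paper's second case reduces to $\prev(u_{\rm in})=s$), points the paper leaves implicit. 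One cosmetic remark: the fact that $t$ is unreachable from $s$ in $R$ when $\Pi$ is maximum is a direct consequence of the augmenting-path characterization of maximum flows, not of Theorem~\ref{thm:EvenT75} per se, so you might cite it that way.
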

\begin{proof}
	Consider the residual graph $R=R(G',\Pi)$.
	Let $A$ be the set of vertices in $V(G)$ that in the residual graph
	$R$ are reachable from $s$.
	A standard result from the theory of maximum flows
	tells that the edges from $A$ to $V(G)\setminus A$,
	denoted by $\delta_R(A)$,
	form a minimum edge $s$-$t$ cut in $G'$ and 
	there are $|\Pi|$ edges in such a cut $\delta_R(A)$.
	
	Let	$U$ be the set of vertices $u \in V(\Pi)$ such that
	$u_{\rm out} \notin A$ but $u_{\rm in}\in A$
	or such that 
	$u_{\rm in} \notin A$ but $\prev(u_{\rm in})\in A$.
	(Here, like in previous proofs, we use $\prev(u)$ to denote the vertex
	such that $\dart{\prev(u)}{u}$ belongs to $E(\Pi)$.)
	Each edge of the cut $\delta_R(A)$ contributes one
	vertex to $U$.
	Then $U$ is a minimum $s$-$t$ cut in $G$.
	
	If $t$ is not reachable from $s$ in $R$, then 
	a vertex $u$ is reachable from $s$ in the residual graph $R$ 
	if and only if $u$ is reachable from $s$ in layered residual graph $L$.
	Thus, to compute  $U$, we apply
	Lemma~\ref{lem:levels} to find
	the level $\lambda(v)$ of every vertex $v$
	in $L$. Then, the set $U$ is
	\[
		\{ u\in V(G)\mid \lambda(u_{\rm in})<+\infty, 
		\lambda(u_{\rm out})=+\infty\} \cup 
			\{ u\in V(G)\mid \lambda(\prev(u_{\rm in}))<+\infty, 
		\lambda(u_{\rm in})=+\infty\},
	\]
	as desired.
\end{proof}

Now, we put everything together.
By Theorem~\ref{thm:EvenT75}, we have $O(\sqrt{n})$ phases,
and each phase can be implemented in $O(T_c(n)+ n T_q(n) + nT_d(n))$ time 
because of Lemma~\ref{lem:combination} and 
Lemma~\ref{lem:oneiterationgeometric}.

\begin{theorem}
\label{thm:generic}
	Let $G$ be a directed graph with $n$ vertices
	and assume that a representation of its edges as given in 
	Graph Encoding~\ref{ds:encoding edges} exists.
	Then, we can find in $O(n^{1/2}(T_c(n)+ n T_q(n) + n T_d(n)))$ 
	time the maximum number of vertex-disjoint $s$-$t$ paths 
	for any given $s,t\in V(G)$.
	Similarly, we can find a minimum $s$-$t$ cut.	
\end{theorem}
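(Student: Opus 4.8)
The plan is to run the generic phase-based algorithm of Section~\ref{sec:generic} (the Even--Tarjan blocking-flow scheme), but to never materialize the transformed graph $G'$, the residual graphs $R_j$, or the layered residual graphs $L_j$; instead, every phase is carried out through the data structure of Graph Encoding~\ref{ds:encoding edges} together with the $\prev/\next$ pointers along the current path family. Concretely, I would start with $\Pi_0=\emptyset$ and, in phase $j=1,2,\dots$, invoke Lemma~\ref{lem:oneiterationgeometric} on $\Pi_{j-1}$ to obtain a blocking family $\Gamma_j$ of $s$-$t$ paths in $L_j=L(G,\Pi_{j-1})$, and then apply Lemma~\ref{lem:combination} to $E(\Pi_{j-1})\oplus E(\Gamma_j)$ to get the new family $\Pi_j$ of edge-disjoint (hence vertex-disjoint, after the transformation) $s$-$t$ paths. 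The loop stops at the first phase where $L_j$ contains no $s$-$t$ path, which is detected inside Lemma~\ref{lem:oneiterationgeometric} (equivalently, inside the level computation of Lemma~\ref{lem:levels}, where $t$ never receives a finite level).

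For correctness I would appeal directly to the generic analysis: Theorem~\ref{thm:EvenT75} guarantees that the final $\Pi_{j-1}$ is a maximum family of vertex-disjoint $s$-$t$ paths in $G'$, and the transformation $G\mapsto G'$ in Section~\ref{sec:generic} was set up so that vertex-disjoint $s$-$t$ paths in $G'$ correspond bijectively to vertex-disjoint $s$-$t$ paths in $G$. The one thing to note is that the implicit implementation faithfully simulates the explicit one: Lemma~\ref{lem:levels} reproduces the BFS levels of $R_j$, and Lemma~\ref{lem:oneiterationgeometric} reproduces the Even--Tarjan DFS on $L_j$; both lemmas were proved with exactly this equivalence in mind, so nothing further is required. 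For the minimum-cut statement, once the maximum family $\Pi$ has been computed, I apply Lemma~\ref{lem:cut} once to extract the cut.

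For the running time, Theorem~\ref{thm:EvenT75} bounds the number of phases by $O(\sqrt{n})$. Within a phase, Lemma~\ref{lem:oneiterationgeometric} costs $O(T_c(n)+nT_q(n)+nT_d(n))$, and the bookkeeping in Lemma~\ref{lem:combination} costs $O(|E(\Pi_{j-1})|+|E(\Gamma_j)|)$; since $\Pi_{j-1}$ and $\Gamma_j$ are families of vertex-disjoint $s$-$t$ paths in the $(2n-2)$-vertex graph $G'$, this is $O(n)$, which is absorbed into the previous bound (as $nT_q(n),nT_d(n)\ge n$). Summing over the $O(\sqrt{n})$ phases yields $O(n^{1/2}(T_c(n)+nT_q(n)+nT_d(n)))$, and the single concluding call to Lemma~\ref{lem:cut}, itself costing $O(T_c(n)+nT_q(n)+nT_d(n))$, does not change the bound.

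The main obstacle is conceptual rather than computational: one must be confident that the implicit simulations in Lemmas~\ref{lem:levels}, \ref{lem:oneiterationgeometric}, and \ref{lem:cut} match the explicit Even--Tarjan algorithm phase by phase, so that the monotonicity of $\lambda_j(t)$ and hence the $O(\sqrt{n})$ phase bound of Theorem~\ref{thm:EvenT75} carry over verbatim. Since those lemmas were stated precisely so that they compute the same levels, the same blocking family, and the same set of vertices reachable from $s$ as the explicit version, the theorem follows by a straightforward composition of the established pieces, with no additional calculation to carry out.
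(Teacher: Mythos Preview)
Your proposal is correct and follows essentially the same approach as the paper's own proof: run the Even--Tarjan phase scheme, invoke Lemma~\ref{lem:oneiterationgeometric} for the blocking family in each phase, Lemma~\ref{lem:combination} to update $\Pi_j$, bound the number of phases by Theorem~\ref{thm:EvenT75}, and finish with Lemma~\ref{lem:cut} for the cut. Your write-up is more verbose in justifying the absorption of the $O(n)$ term and the faithfulness of the implicit simulation, but the argument is identical in substance.
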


\begin{proof}
	We use the algorithm described in Section~\ref{sec:generic}, 
	before Theorem~\ref{thm:EvenT75}.
	Because of Theorem~\ref{thm:EvenT75}, we have $O(\sqrt{n})$ phases.
	At phase $j$, we have a set $\Pi_{j-1}$ of vertex-disjoint paths 
	in $G'$, and we use Lemma~\ref{lem:oneiterationgeometric} 
	to find a blocking family $\Gamma_j$ of $s$-$t$ paths in 
	the layered residual graph $L_j=L(G',\Pi_{j-1})$. This takes 
	$O(T_c(n) + n T_d(n) + n T_q(n))$ time per phase.
	Because of Lemma~\ref{lem:combination}, we can then obtain 
	the new family of $s$-$t$ paths $\Pi_j$ in $O(n)$ time per phase. 
	The result for maximum number of vertex-disjoint $s$-$t$ paths follows.	
	For the minimum $s$-$t$ cut, we use Lemma~\ref{lem:cut}.
\end{proof}

\section{Geometric Applications}

Theorem~\ref{thm:generic} leads to several consequences for 
geometrically defined graphs,
as we can use geometric data structures to realize
Graph Encoding~\ref{ds:encoding edges} efficiently.
For unit disk graphs, there is the semi-dynamic data structure
of Efrat, Itai, and Katz~\cite{EfratIK01}.
The construction takes $O(n\log n)$ time,
while each deletion and neighbor query takes $O(\log n)$ amortized time.
For arbitrary disks, we can use
the structure of Kaplan et al.~\cite{KaplanMRSS17}.

\begin{corollary}
\label{cor:disks}
	Let $\UU$ be a set of $n$ unit disks in the plane
	and let $s$ and $t$ be two of the disks.
	We can find in $O(n^{3/2}\log n)$ time 
	the minimum $s$-$t$ cut	in the intersection graph $G_\UU$.
	For arbitrary disks, the running time becomes 
	$O(n^{3/2}\log^{11} n)$ in expectation.
\end{corollary}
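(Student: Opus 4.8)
The plan is to derive the corollary directly from Theorem~\ref{thm:generic}: for each of the two cases it suffices to realize the abstract interface of Graph Encoding~\ref{ds:encoding edges} by a concrete geometric data structure and to substitute its parameters $T_c,T_d,T_q$ into the running time $O(n^{1/2}(T_c(n)+nT_q(n)+nT_d(n)))$. Since $G_\UU$ is undirected, I would first make it directed by replacing each edge $D_iD_j$ with the two darts $\dart{D_i}{D_j}$ and $\dart{D_j}{D_i}$, exactly as in the footnote of Section~\ref{sec:generic}; this does not change the optimal solution, and in the resulting digraph the out-neighbours of a disk $D$ are precisely the disks of $\UU$ that intersect $D$. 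Hence the query required by Graph Encoding~\ref{ds:encoding edges} becomes: \emph{given a query disk $D_v$, report some disk of the current set $U$ that intersects $D_v$, or report that none exists}, under an initial construction and subsequent deletions only. I note at the outset that only deletions are ever needed (Lemmas~\ref{lem:levels} and~\ref{lem:oneiterationgeometric} rebuild the structures rather than insert into them), that the superadditivity requirement $T_c(m)+T_c(m')\le T_c(m+m')$ is precisely what keeps the per-phase rebuilds at total cost $O(T_c(n))$ even when split over the disjoint layer sets $V[i]$ of Lemma~\ref{lem:oneiterationgeometric}, and that amortized or expected bounds on $T_d,T_q$ suffice, since each phase issues only $O(n)$ deletions and queries and there are $O(\sqrt n)$ phases by Theorem~\ref{thm:EvenT75}.

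Consider first the unit-disk case, where every disk has radius $r$. Two such disks intersect iff their centres are at distance at most $2r$; so I would store, for each $u\in U$, the disk of radius $2r$ about the centre $c_u$, turning the query ``report a disk of $U$ intersecting $D_v$'' into a stabbing query with the point $c_v$ in a set of congruent disks that undergoes deletions. This is exactly what the semi-dynamic structure of Efrat, Itai, and Katz~\cite{EfratIK01} supports, with $T_c(m)=O(m\log m)$ and $T_d(n)=T_q(n)=O(\log n)$ amortized; here $T_c$ is superadditive, as required. Substituting into Theorem~\ref{thm:generic} gives $O(n^{1/2}\cdot n\log n)=O(n^{3/2}\log n)$.

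For arbitrary disks, $D_u$ and $D_v$ intersect iff $\|c_u-c_v\|\le r_u+r_v$, i.e.\ iff the additively weighted distance $\|c_u-c_v\|-r_u$ from the point $c_v$ to the weighted site $(c_u,r_u)$ is at most $r_v$. Thus the query amounts to a single additively weighted nearest-neighbour query among $U$: find the site $u$ minimizing $\|c_u-c_v\|-r_u$ and return it iff this minimum is at most $r_v$. Such nearest-neighbour queries, together with deletions and an initial batch construction, are supported by the dynamic Voronoi-diagram structure of Kaplan et al.~\cite{KaplanMRSS17}, which gives $T_c(m)=O(m\polylog m)$ (superadditive) and $T_d(n)=T_q(n)=O(\polylog n)$, in expectation. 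Plugging this into Theorem~\ref{thm:generic} yields $O(n^{3/2}\polylog n)$ expected time, and tracking the concrete polylogarithmic exponents from~\cite{KaplanMRSS17} gives the stated $O(n^{3/2}\log^{11}n)$.

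I expect no real mathematical obstacle beyond what is already encapsulated in Theorem~\ref{thm:generic}; the remaining work is in matching interfaces, and the one point genuinely needing care is that a neighbour query must never return the query disk itself, since disk intersection graphs have no self-loops. This is a standard technicality: one can exclude a single designated element from a nearest-neighbour or stabbing query within the same time bound, or observe that in the specific calls of Lemmas~\ref{lem:levels} and~\ref{lem:oneiterationgeometric} the query vertex is either absent from, or about to be deleted from, the structure being queried, so no insertions are introduced. The only substantial external ingredient is the dynamic Voronoi-diagram data structure of Kaplan et al.~\cite{KaplanMRSS17} (for unit disks, the lighter-weight structure of~\cite{EfratIK01} suffices), whose correctness and running time I would invoke as a black box.
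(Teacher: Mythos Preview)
Your proposal is correct and follows essentially the same approach as the paper: instantiate Graph Encoding~\ref{ds:encoding edges} with the semi-dynamic structure of Efrat, Itai, and Katz~\cite{EfratIK01} for unit disks and with the dynamic structure of Kaplan et al.~\cite{KaplanMRSS17} for arbitrary disks, then plug into Theorem~\ref{thm:generic}. Your write-up is in fact more detailed than the paper's (the explicit reduction to stabbing/additively-weighted nearest neighbour and the remark about excluding the query disk itself), but the underlying argument is the same.
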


We can easily adapt the algorithm to the case
where $s$ and $t$ are arbitrary shapes (and the other vertices
are still represented as disks), by precomputing
the disks that intersect $s$ and the disks that
intersect $t$. We get the following consequence.

\begin{corollary}
\label{cor:barrierunitdisks}
	The barrier resilience problem with $n$ unit
	disks can be solved in $O(n^{3/2}\log n)$ time. 
	For arbitrary disks, the running time becomes $O(n^{3/2}\log^{11} n)$. 
\end{corollary}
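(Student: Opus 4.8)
The plan is to reduce to Corollary~\ref{cor:disks} via the construction of Kumar, Lai, and Arora~\cite{kumar2007barrier}. They show that the minimum number of disks of $\DD$ that a curve from $a$ to $b$ inside $S$ must cross equals the minimum size of an $L_\ell$-$L_r$ vertex cut in the intersection graph $G$ of $\DD \cup \{L_\ell, L_r\}$. Since $L_\ell$ and $L_r$ are two distinct parallel lines, they are non-adjacent in $G$; moreover any $L_\ell$-$L_r$ vertex cut, being a subset of $V(G)\setminus\{L_\ell,L_r\}$, consists only of disks, so a minimum such cut directly gives an optimal set of disks to cross, from which an explicit curve is easily traced in $O(n\log n)$ time. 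Hence it suffices to run the algorithm behind Theorem~\ref{thm:generic} on $G$ with $s=L_\ell$ and $t=L_r$, which only requires a Graph Encoding~\ref{ds:encoding edges} for $G$.

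To build such an encoding, I would handle the disk-disk adjacencies with the data structure of Efrat, Itai, and Katz~\cite{EfratIK01} (unit disks: $T_c(m)=O(m\log m)$, $T_q=T_d=O(\log n)$ amortized) or of Kaplan et al.~\cite{KaplanMRSS17} (arbitrary disks), exactly as in Corollary~\ref{cor:disks}, and handle the two line-terminals separately. A disk meets a vertical line iff the horizontal distance from its center to the line is at most its radius, an $O(1)$ test; so in $O(n)$ total time I precompute the neighbor list of $L_\ell$ and, on each disk, a bit recording whether it is adjacent to $L_r$. Neither line is ever stored in the geometric structure. Inspecting Lemmas~\ref{lem:levels} and~\ref{lem:oneiterationgeometric}, one checks that $s=L_\ell$ is only ever a query vertex and never lies in a maintained set $U$, while $t=L_r$ is never a query vertex but may lie in the set $U$ maintained inside Lemma~\ref{lem:levels}. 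Accordingly, the encoding keeps a flag for ``$L_r\in U$''; a query at a disk $v$ first returns $L_r$ if that flag is set and $v$ is flagged adjacent to $L_r$ (the caller's ensuing deletion clears the flag, so the ``repeatedly query and delete'' loops still terminate), otherwise it consults the geometric structure; and a query at $L_\ell$ walks a monotone pointer down the precomputed neighbor list, skipping vertices no longer in $U$. All of this adds only $O(n)$ work per phase.

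Plugging this encoding into Theorem~\ref{thm:generic} gives $O(\sqrt n)$ phases, each costing $O(T_c(n)+nT_q(n)+nT_d(n))+O(n)$, i.e.\ $O(n\log n)$ for unit disks and $O(n\log^{11}n)$ expected for arbitrary disks, plus $O(n)$ for extracting the final cut via Lemma~\ref{lem:cut}; this totals $O(n^{3/2}\log n)$ and $O(n^{3/2}\log^{11}n)$ expected, respectively. The main point to get right is not this arithmetic but checking that the line-terminal shortcuts preserve the invariants used inside Lemmas~\ref{lem:levels} and~\ref{lem:oneiterationgeometric}: that each enumeration loop still lists exactly the out-neighbors currently in $U$ when one of them is $L_r$, and that the monotone pointer for $L_\ell$ never has to back up (it does not, since $L_\ell$ is queried only in the computation of the first layer, before $\DS$ is rebuilt). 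The rest is a direct assembly of results already established in the excerpt.
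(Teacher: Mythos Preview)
Your proposal is correct and follows the same approach as the paper: the paper's entire proof is the sentence preceding the corollary, namely ``precompute the disks that intersect $s$ and the disks that intersect $t$,'' which is exactly your strategy of handling the two line-terminals by explicit neighbor lists while leaving the disk--disk adjacencies to the geometric data structure. Your write-up is far more detailed than the paper's; one small imprecision is that $L_\ell$ is also queried during the DFS of Lemma~\ref{lem:oneiterationgeometric} (each restart from $s$ queries $\DS[1]$), not only in Lemma~\ref{lem:levels}, but since those structures are deletion-only your monotone-pointer argument still goes through once you reset the pointer per structure.
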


For directed transmission graphs, we can use the data structure of
Chan~\cite{Chan2019} to report a disk center contained in a query disk.
It takes $O(\log^4 n)$ amortized time per edition and query.
(See~\cite{AgarwalM95,Chan10,ChanT16,KaplanMRSS17,Liu20} for related bounds
and for an alternative presentation of Chan's data structure.)

\begin{corollary}
\label{thm:reachability}
	Let $\UU$ be a set of $n$ disks of arbitrary radii in the plane
	and let $s$ and $t$ be two of the disks.
	We can find in $O(n^{3/2}\log^4 n)$ time 
	the minimum $s$-$t$ cut
	in the directed transmission graph $G^\rightarrow_\UU$.
\end{corollary}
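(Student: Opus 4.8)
The plan is to instantiate Theorem~\ref{thm:generic} with a suitable realization of Graph Encoding~\ref{ds:encoding edges} for the directed transmission graph $G^\rightarrow_\UU$. Recall that in $G^\rightarrow_\UU$ there is an edge $\dart{D_1}{D_2}$ exactly when $D_1$ contains the center $c_{D_2}$ of $D_2$. Thus an out-neighbor query at a vertex $D_1$ asks: among the disks currently in the maintained set $U$, find one whose center lies inside $D_1$, or report that none exists. This is precisely a point-in-disk (equivalently, disk-containing-point, after a standard lifting) reporting-one-object query against a semi-dynamic set of disks, and I would use Chan's data structure~\cite{Chan2019} referenced just above the statement, which supports each deletion and each query in $O(\log^4 n)$ amortized time. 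The construction time for $m$ disks is $O(m\log^4 n)$ (or $O(m\log m)$; in either case subadditive up to the global $\log$ factor), so $T_c(m)=O(m\,\polylog n)$, and the subadditivity hypothesis $T_c(m)+T_c(m')\le T_c(m+m')$ of Graph Encoding~\ref{ds:encoding edges} is met (possibly after replacing $T_c$ by a convenient subadditive upper bound). Hence $T_c(n)=O(n\log^4 n)$, $T_q(n)=O(\log^4 n)$, and $T_d(n)=O(\log^4 n)$.

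With these bounds in hand, I would simply plug into Theorem~\ref{thm:generic}: the running time is
\[
	O\!\left(n^{1/2}\bigl(T_c(n)+nT_q(n)+nT_d(n)\bigr)\right)
	~=~ O\!\left(n^{1/2}\bigl(n\log^4 n + n\log^4 n + n\log^4 n\bigr)\right)
	~=~ O\!\left(n^{3/2}\log^4 n\right),
\]
which is exactly the claimed bound. One subtlety worth addressing explicitly: Graph Encoding~\ref{ds:encoding edges} maintains a subset $U$ of \emph{vertices} of $G$, and a query vertex $v$ need not itself be in $U$ — this matches the situation here, since $v$ is a disk $D_1$ and we only need to search among the centers of disks in $U$, so the structure stores one point (the center) per element of $U$ and answers "does $D_1$ contain a stored center" queries; $D_1$ itself plays no role as a stored object. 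I would remark that Chan's structure answers "is there a stored point inside the query disk, and if so return one" after the usual duality/linearization that turns disks into halfspaces in $\RR^3$, so it applies verbatim.

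Since the amortized bounds for $T_q$ and $T_d$ hold over any sequence of operations starting from a freshly constructed structure, and the algorithm of Theorem~\ref{thm:generic} performs $O(n)$ queries and $O(n)$ deletions per phase against structures that are rebuilt at the start of each phase (or at the start of each call to Lemmas~\ref{lem:levels} and~\ref{lem:oneiterationgeometric}), the amortization is valid phase by phase and the total over $O(\sqrt n)$ phases telescopes to the stated bound; I would state this matching of amortization windows in one sentence. Strictly speaking there is nothing in this corollary that is a genuine obstacle — it is a direct application — so the only care needed is (i) checking that the query primitive of $G^\rightarrow_\UU$ is the one Chan's structure provides and (ii) confirming the subadditivity condition on $T_c$; both are routine. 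I therefore expect the proof to be three or four sentences invoking Theorem~\ref{thm:generic} and~\cite{Chan2019} and doing the arithmetic above.
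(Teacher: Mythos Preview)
Your proposal is correct and follows exactly the paper's approach: the paper simply invokes Chan's data structure~\cite{Chan2019} (with $O(\log^4 n)$ amortized time per update and query) as the realization of Graph Encoding~\ref{ds:encoding edges} for the transmission graph and applies Theorem~\ref{thm:generic}. Your additional remarks on subadditivity of $T_c$ and on the amortization windows are more careful than the paper itself, which dispatches the corollary in a single sentence.
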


Similar results can be obtained for squares and rectangles using
data structures for orthogonal range searching. We next provide
concrete running times for future reference.
For intersection graphs of unit squares, 
we can again use the semi-dynamic data structure
of Efrat, Itai, and Katz, that also applies for 
the $L_1$-metric~\cite[Remark~5.5]{EfratIK01}.
As before, the construction takes $O(n\log n)$ time,
while each deletion and neighbor query takes $O(\log n)$ amortized time.

\begin{corollary}
\label{cor:squares}
	Let $\UU$ be a set of $n$ unit axis-parallel squares in the plane,
	and let $s$ and $t$ be two of the squares.
	We can find in $O(n^{3/2}\log n)$ time 
	the minimum $s$-$t$ cut	in the intersection graph $G_\UU$.
\end{corollary}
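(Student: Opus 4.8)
The plan is to instantiate the generic machinery of Theorem~\ref{thm:generic}: it suffices to realize Graph Encoding~\ref{ds:encoding edges} for the intersection graph $G_\UU$ of unit axis-parallel squares with $T_c(n)=O(n\log n)$ and $T_d(n)=T_q(n)=O(\log n)$ amortized, and then to read off the bound $O(n^{1/2}(T_c(n)+nT_q(n)+nT_d(n)))=O(n^{3/2}\log n)$. Since $G_\UU$ is undirected, I first apply the reduction in the footnote of Section~\ref{sec:generic}, replacing each edge by two opposite directed edges; under this convention, an outgoing-edge query at a square $v$ is nothing but the task of finding a square in the current subset $U$ that intersects $v$.

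Next I translate this geometric query into a form handled by a known structure. Two axis-parallel squares of side length $1$ intersect if and only if the $L_\infty$-distance between their centers is at most $1$. Applying a rotation of the plane by $45^\circ$ turns the $L_\infty$-metric into (a scaled copy of) the $L_1$-metric, so, working with the transformed centers, the query becomes: given a point $p$ (the center of $v$) and the point set $U$, report a point of $U$ within a fixed $L_1$-distance of $p$, or report that none exists. This is precisely the operation supported by the semi-dynamic data structure of Efrat, Itai, and Katz~\cite{EfratIK01}, which, as noted in their Remark~5.5, also works for the $L_1$-metric. It is built in $T_c(n)=O(n\log n)$ time, and supports deletions and such neighbor queries in $T_d(n)=T_q(n)=O(\log n)$ amortized time. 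The superadditivity required of $T_c$ is immediate, since $m\log m+m'\log m'\le (m+m')\log(m+m')$ for all $m,m'\ge 1$.

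With these parameters, Theorem~\ref{thm:generic} gives running time $O(n^{1/2}(n\log n+n\log n+n\log n))=O(n^{3/2}\log n)$ to compute a maximum family of vertex-disjoint $s$-$t$ paths, and Lemma~\ref{lem:cut} converts this into a minimum $s$-$t$ cut within the same bound.

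The only point needing care is the amortization: the bounds of~\cite{EfratIK01} are amortized over a sequence of deletions (and queries) performed on a fixed initial point set, with no insertions after construction. I would check that this matches the access pattern of our algorithm: in Lemma~\ref{lem:levels} the structure $\DS$ is rebuilt a constant number of times and then subjected to $O(n)$ deletions and queries, and in Lemma~\ref{lem:oneiterationgeometric} the structures $\DS[i]$ are built on pairwise disjoint point sets and together receive $O(n)$ operations; in both cases the requirement is met, so the amortized analysis of~\cite{EfratIK01} applies verbatim and the per-phase cost is $O(T_c(n)+nT_q(n)+nT_d(n))=O(n\log n)$ as claimed. Beyond this bookkeeping, the argument is a direct substitution into Theorem~\ref{thm:generic}.
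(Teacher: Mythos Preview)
Your proposal is correct and follows exactly the approach sketched in the paper: plug the semi-dynamic data structure of Efrat, Itai, and Katz~\cite{EfratIK01} (which, per their Remark~5.5, works for the $L_1$-metric, and hence for $L_\infty$ via the $45^\circ$ rotation you describe) into Theorem~\ref{thm:generic}. The paper states this in a single sentence without an explicit proof, while you spell out the $L_\infty$-to-$L_1$ reduction and the amortization bookkeeping, but the substance is identical.
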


The following corollary covers also the case of squares of different sizes.

\begin{corollary}
\label{cor:rectangles}
	Let $\UU$ be a set of $n$ axis-parallel rectangles in the plane
	and let $s$ and $t$ be two of the rectangles.
	We can find in $O(n^{3/2}\log^2 n)$ time 
	the minimum $s$-$t$ cut	in the intersection graph $G_\UU$.
\end{corollary}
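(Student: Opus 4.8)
The plan is to apply Theorem~\ref{thm:generic} to the intersection graph $G_\UU$, where (as in the footnote of Section~\ref{sec:generic}) each undirected edge is split into two opposite directed edges, since an optimal solution to this directed instance is an optimal solution for $G_\UU$. Thus it suffices to realize the semi-dynamic structure $\DS(U)$ of Graph Encoding~\ref{ds:encoding edges} with $T_c(n)=O(n\log^2 n)$ and $T_q(n)=T_d(n)=O(\log^2 n)$: substituting these bounds into Theorem~\ref{thm:generic} gives $O(n^{1/2}(n\log^2 n))=O(n^{3/2}\log^2 n)$. Concretely, $\DS(U)$ is built knowing the whole set $U\subseteq\UU$ of rectangles, must support deletions from $U$, and, given a query rectangle $v$, must report a rectangle of $U$ meeting $v$ or certify that none exists.

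For the query I would first separate the two axes: writing a rectangle $R$ as $I_x(R)\times I_y(R)$ with $I_x(R)=[x^-(R),x^+(R)]$ and analogously for $y$, we have $R\cap v\neq\emptyset$ iff $I_x(R)\cap I_x(v)\neq\emptyset$ and $I_y(R)\cap I_y(v)\neq\emptyset$. An interval $[r^-,r^+]$ meets $[q^-,q^+]$ iff $r^-\le q^-\le r^+$ or $q^-<r^-\le q^+$, and these two cases are disjoint. The first case is a stabbing query, which a segment tree over the $x$-endpoints of $U$ answers by returning $O(\log n)$ canonical nodes whose stored sets partition exactly the rectangles of $U$ whose $x$-interval contains $x^-(v)$; the second case is a one-sided range query on the values $x^-(R)$, which a balanced search tree answers with $O(\log n)$ canonical nodes. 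To each primary canonical node $\nu$, with stored set $S_\nu\subseteq U$, I attach a secondary structure of exactly the same type built on $\{I_y(R):R\in S_\nu\}$ --- a segment tree on $y$-intervals together with a balanced tree on the values $y^-(R)$ --- and I keep at every node of every secondary structure a counter of its non-deleted rectangles and one representative rectangle. A query resolves the $x$-side into $O(\log n)$ sets $S_\nu$ and each of those into $O(\log n)$ secondary canonical nodes, inspecting their counters; this touches $O(\log^2 n)$ nodes, so $T_q(n)=O(\log^2 n)$, and a nonzero counter supplies a witness. Deleting a rectangle $R$ updates the $O(\log n)$ secondary nodes inside each of the $O(\log n)$ primary nodes holding $R$, so $T_d(n)=O(\log^2 n)$. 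The primary canonical sets have total size $O(n\log n)$ and a secondary structure on $k$ elements is built in $O(k\log k)$ time, so $T_c(n)=O(n\log^2 n)$, which also satisfies the superadditivity $T_c(m)+T_c(m')\le T_c(m+m')$ required by Graph Encoding~\ref{ds:encoding edges}.

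The step I expect to be the crux is holding the nested query to $O(\log^2 n)$. A rectangle--rectangle intersection is a conjunction of four dominance conditions on $(x^-(R),x^+(R),y^-(R),y^+(R))$, so a plain four-level range tree would cost $\Theta(\log^3 n)$ or $\Theta(\log^4 n)$ per operation; the saving comes from noticing that the two $x$-conditions are coupled (they jointly say "$I_x(R)$ meets $I_x(v)$") and hence must be handled together by an interval structure whose stabbing query costs $O(\log n)$, rather than as two independent tree levels, and likewise for the $y$-conditions. With that in place the remaining issues are routine and are dealt with as in the disk-graph case of Corollary~\ref{cor:disks}: the segment trees are built on the coordinates of $U$ while a query rectangle's coordinates are placed by binary search, and at most the constantly many special rectangles (such as $s$ and $t$ themselves) that must not be returned as their own neighbors are filtered out after a query. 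This parallels Corollary~\ref{cor:squares}, but because the rectangles may have different sizes it cannot use the $L_1$ structure of Efrat, Itai, and Katz and instead needs the data structure just described.
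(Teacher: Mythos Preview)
Your argument is correct and reaches the stated bound. The overall plan---realize Graph Encoding~\ref{ds:encoding edges} by a deletion-only rectangle-intersection oracle with $T_c(n)=O(n\log^2 n)$ and $T_q(n)=T_d(n)=O(\log^2 n)$, then invoke Theorem~\ref{thm:generic}---is exactly the paper's, but the oracle itself is built differently. The paper first cites Edelsbrunner's rectangle-intersection structure as a black box with those bounds, and then sketches an alternative based on the characterization ``$R\cap R'\neq\emptyset$ iff some edge of $R$ crosses some edge of $R'$, or a corner of one lies inside the other,'' handling each of these sub-predicates with its own standard orthogonal structure (range trees, interval trees, segment trees) and deleting by marking without rebalancing. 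You instead use the more direct characterization $R\cap v\neq\emptyset\Leftrightarrow I_x(R)\cap I_x(v)\neq\emptyset\text{ and }I_y(R)\cap I_y(v)\neq\emptyset$, split each one-dimensional interval-intersection test into a stabbing half (segment tree) and a range half (1D range tree), and nest the $y$-structure inside the $x$-structure. Both routes give the same $O(\log^2 n)$ per-operation cost; yours is a single, uniform two-level design with fewer case distinctions, while the paper's buys brevity by appealing to an off-the-shelf result.
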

\begin{proof}
	To apply Theorem~\ref{thm:generic}, we need a data structure that 
	maintains a set $\UU$ of $n$ axis-parallel rectangles 
	under deletions and answers the following type of queries: 
	given an axis-parallel rectangle $R$, 
	report some rectangle $\UU$ that intersects $R$.

	For this task, Edelsbrunner~\cite{Edels80} provides a 
	data structure with 
	$T_c(n)=O(n\log^2 n)$,
	$T_q(n)=O(\log^2 n)$ amortized, and $T_d(n)=O(\log^2 n)$ amortized.	

	An alternative approach to get the desired data structure is 
	the following. Two rectangles $R$ and $R'$ intersect if and only
	if some edge of $R$ intersects some edge of $R'$, or some vertex of $R$
	is contained in $R'$, or some vertex of $R'$ is contained in $R$.
	Each one of this conditions can be checked using orthogonal 
	range searching 
	techniques, namely using range trees~\cite[Section 5.3]{BergCKO08},
	interval trees~\cite[Section 10.1]{BergCKO08} 
	and segment trees~\cite[Section 10.3, Exercise 10.8]{BergCKO08}.
	In the static setting, this readily gives a data structure
	with construction time $O(n\log^2 n)$ and worst-case query 
	time $O(\log^2 n)$.
	These data structures can handle deletions by marking certain 
	information
	as deleted and updating the pointers locally, without rebalancing.
	Together, this gives a data structure with 
	$T_c(n)=O(n\log^2 n)$, $T_q(n)=O(\log^2 n)$ in the worst case, 
	and $T_d(n)=O(\log^2 n)$ in the worst case; here $n$ denoted the 
	original number of rectangles.
	
	Using any of the two data structures and Theorem~\ref{thm:generic},
	the result follows.	
\end{proof}

The barrier problem with axis-parallel squares or rectangles can now be solved
similarly. 
For the case of unit squares, it pays off to precompute the squares intersected
by each boundary of the strip. For arbitrary squares or rectangles, we could
treat each boundary as a rectangle.

\begin{corollary}
\label{cor:barriersquares}
	The barrier resilience problem with $n$ 
	axis-parallel squares of unit side length 
	can be solved in $O(n^{3/2}\log n)$ time. 
	For arbitrary squares or rectangles, the 
	running time becomes $O(n^{3/2}\log^2 n)$. 
\end{corollary}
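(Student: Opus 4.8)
The plan is to reduce the barrier resilience problem to a minimum vertex cut in a geometric intersection graph and then feed it into the tools developed above; there are essentially two cases, the general one (arbitrary squares or rectangles) and the unit‑square one, differing only in how the two strip boundaries are modeled. The starting point is the reduction of Kumar, Lai, and Arora~\cite{kumar2007barrier}: the minimum number of regions of $\DD$ crossed by a curve from $a$ to $b$ inside the strip $S$ equals the size of a minimum $L_\ell$-$L_r$ vertex cut in the intersection graph $G$ of $\DD\cup\{L_\ell,L_r\}$. This reduction is purely topological (a curve is blocked exactly when $\DD$ contains a chain of pairwise intersecting regions reaching from $L_\ell$ to $L_r$), so it holds verbatim when $\DD$ consists of axis‑parallel squares or rectangles rather than disks. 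Moreover $L_\ell$ and $L_r$ are disjoint vertical lines, so the corresponding vertices are non‑adjacent in $G$, as our cut algorithm requires.

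For arbitrary squares or rectangles I would simply treat each boundary as a rectangle: replace $L_\ell$ (resp.\ $L_r$) by a zero‑width axis‑parallel rectangle whose vertical extent covers all of $\DD$. This reproduces exactly the intersection graph $G$ of the reduction and keeps the two new rectangles disjoint, hence non‑adjacent. Applying Corollary~\ref{cor:rectangles} to this instance of $n+2=O(n)$ rectangles, with $s$ and $t$ the two boundary rectangles, yields the claimed $O(n^{3/2}\log^2 n)$ bound.

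For unit squares this trick fails because $L_\ell$ and $L_r$ are not unit squares, so instead I would realize Graph Encoding~\ref{ds:encoding edges} for $G$ directly, along the lines of the remark following Corollary~\ref{cor:disks}. The square vertices are maintained by the semi‑dynamic $L_1$ intersection structure of Efrat, Itai, and Katz~\cite[Remark~5.5]{EfratIK01} (applicable since two unit axis‑parallel squares intersect iff their centers are within $L_\infty$‑distance $1$, which becomes an $L_1$ condition after a $45^\circ$ rotation), giving $T_c(n)=O(n\log n)$ and $T_q(n),T_d(n)=O(\log n)$ amortized. For the two special vertices $s=L_\ell$ and $t=L_r$ I would precompute, by a one‑dimensional range query on the $x$‑coordinates of the squares in $O(n\log n)$ time, the set $N(L_\ell)$ of squares meeting $L_\ell$ and the set $N(L_r)$ of squares meeting $L_r$; I store $N(L_\ell)$ as a doubly linked list, from which the still‑unprocessed out‑neighbors of $s$ are handed out one at a time, and $N(L_r)$ as a Boolean array, which answers ``is this square a neighbor of $t$?'' in $O(1)$ time. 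With this bookkeeping, every step in the proofs of Lemma~\ref{lem:levels} and Lemma~\ref{lem:oneiterationgeometric} that would query $\DS$ with $s$, or test for an edge into $t$, runs in $O(1)$ amortized time, adding only $O(n\log n)$ to the preprocessing and $O(n)$ per phase. Substituting $T_c(n)=O(n\log n)$ and $T_q(n)=T_d(n)=O(\log n)$ into Theorem~\ref{thm:generic} gives $O(n^{1/2}\cdot n\log n)=O(n^{3/2}\log n)$.

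I expect the only slightly delicate point to be the unit‑square case: one must check that replacing the geometric neighbor queries at $s$ and at $t$ by the precomputed list and array does not disturb the layer‑by‑layer invariant of Lemma~\ref{lem:levels} or the DFS exploration of Lemma~\ref{lem:oneiterationgeometric} — the key observation being that $s$ occurs only at level $0$ and $t$ only at level $\lambda(t)$, so these accesses sit ``at the boundary'' and never interact with the geometric structure of the intermediate layers — and that the amortized bounds of the Efrat--Itai--Katz structure survive the repeated (re)constructions of $\DS$, which is precisely the content of the subadditivity requirement on $T_c$ in Graph Encoding~\ref{ds:encoding edges}. Everything else is a direct substitution into Theorem~\ref{thm:generic} and Corollary~\ref{cor:rectangles}.
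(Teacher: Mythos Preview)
Your proposal is correct and follows essentially the same approach as the paper: for arbitrary rectangles you treat each strip boundary as a (degenerate) rectangle and invoke Corollary~\ref{cor:rectangles}, and for unit squares you precompute the neighbors of the two boundary lines and plug the Efrat--Itai--Katz $L_1$/$L_\infty$ structure into Theorem~\ref{thm:generic}. The paper gives no formal proof beyond the two-sentence remark preceding the corollary, and your write-up simply spells out the details that remark leaves implicit.
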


The reachability graph $G^\rightarrow_\UU$ can be defined also for 
sets $\UU$ of axis-parallel squares: 
there is a directed edge from square $S$ to square $S'$ if $S$ 
contains the center of $S'$.

\begin{corollary}
\label{thm:reachabilitysquare}
	Let $\UU$ be a set of $n$ axis-parallel squares in the plane
	and let $s$ and $t$ be two of the squres.
	We can find in $O(n^{3/2}\log^2 n)$ time 
	the minimum $s$-$t$ cut
	in the directed transmission graph $G^\rightarrow_\UU$.
\end{corollary}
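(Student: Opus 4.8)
The plan is to invoke Theorem~\ref{thm:generic} on the directed graph $G = G^\rightarrow_\UU$; the only thing left to supply is a data structure realizing Graph Encoding~\ref{ds:encoding edges} for the out-neighbor relation of a directed transmission graph of axis-parallel squares. Recall that $\dart{S}{S'}$ is an edge of $G^\rightarrow_\UU$ exactly when $S$ contains the center $c_{S'}$ of $S'$. Thus, if we store with each square $S' \in U$ the single point $c_{S'}$, then finding an outgoing edge $\dart{S}{S'}$ with $S' \in U$ is the same as answering the query: given an axis-parallel rectangle $S$ (in particular, a query square), report a point of the current point set lying inside $S$, or report that there is none. Deleting a vertex $S' \in U$ is just deleting the point $c_{S'}$.

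First I would instantiate this with a two-dimensional range tree on the centers~\cite[Section~5.3]{BergCKO08}. Construction costs $T_c(m) = O(m\log m)$, which is superadditive in the sense demanded by Graph Encoding~\ref{ds:encoding edges}. For a query rectangle we select the $O(\log n)$ canonical nodes along the $x$-coordinate, and in the auxiliary balanced search tree of each such node a single successor search by $y$-coordinate decides whether that node contributes a point to the answer and, if so, names one; hence $T_q(n) = O(\log^2 n)$. Exactly as in the proof of Corollary~\ref{cor:rectangles}, deletions are performed by marking the point as deleted in each of the $O(\log n)$ auxiliary trees that contain it and updating pointers locally, without rebalancing, so $T_d(n) = O(\log^2 n)$ as well.

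Finally, substituting $T_c(n) = O(n\log n)$ and $T_q(n) = T_d(n) = O(\log^2 n)$ into Theorem~\ref{thm:generic} gives running time $O\bigl(n^{1/2}\,(n\log n + n\log^2 n + n\log^2 n)\bigr) = O(n^{3/2}\log^2 n)$, both for the maximum number of vertex-disjoint $s$-$t$ paths in $G^\rightarrow_\UU$ and, through Lemma~\ref{lem:cut}, for a minimum $s$-$t$ cut. I do not anticipate a genuine obstacle here: the transmission-graph query is an orthogonal range reporting query, which is easier than the rectangle-intersection query behind Corollary~\ref{cor:rectangles}, and the $O(\log^2 n)$ range tree already matches the claimed bound. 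The only point needing a little care is arranging the deletions so that no rebalancing occurs, keeping the $O(\log n)$ pointers into the auxiliary structures valid throughout — the same device already used in Corollary~\ref{cor:rectangles}.
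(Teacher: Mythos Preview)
Your proposal is correct and follows essentially the same approach as the paper: the paper's proof simply says to use a semi-dynamic range-tree data structure for reporting a center point contained in a query square, ``as sketched in the proof of Corollary~\ref{cor:rectangles}'', which is exactly the construction you spell out in more detail. The resulting bounds $T_c(n)=O(n\log n)$ and $T_q(n)=T_d(n)=O(\log^2 n)$ plugged into Theorem~\ref{thm:generic} give the claimed $O(n^{3/2}\log^2 n)$.
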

\begin{proof}
	Use the semi-dynamic data structure to report a point contained
	in a query square that is based on 
	range trees~\cite[Section 5.3]{BergCKO08}, as sketched in the 
	proof of Corollary~\ref{cor:rectangles}.
\end{proof}

\section{Conclusion}
We have shown how to combine the classic maximum-flow
algorithm of Even and Tarjan~\cite{EvenT75} with recent geometric data 
structures in order to find a minimum $s$-$t$ cut in geometric
intersection graphs. Even though we follow along the lines of 
the classic algorithms, the details for an efficient implementation
in the geometric setting are quite subtle and show an interesting
interplay between geometric and combinatorial algorithms.

Our work raises
the question whether similar ``geometric'' versions are possible
for other, more advanced, network flow algorithms such as the one
by Goldberg and Rao~\cite{GoldbergRa99}. Similarly, it is an interesting
challenge to adapt algorithms for other combinatorial graph optimization
problems to the geometric setting. For a recent example that considers
the maximum matching problem, see~\cite{BonnetCaMu20}.

Finally, we cannot resist mentioning the tantalizing open problem
of settling the complexity status of the general barrier resilience
problem~\cite{kumar2007barrier}. Unlike for the strip version, 
we do not know any polynomial time algorithm for it. On the other hand,
up to now, all attempts at a proof of NP-hardness have failed.
An answer to this question would be most welcome.

\bibliographystyle{alpha}
\bibliography{bibliodisks}

\end{document}